\newtheorem{theorem}{Theorem}[section]
\newtheorem{lemma}[theorem]{Lemma}
\title{Prioritized-MVBA: A New Approach to Design an Optimal Asynchronous Byzantine Agreement Protocol}
\author{ {Nasit S Sony} \\
	University of California, Merced\\
	CA 95340, USA \\
	\texttt{nsony@ucmerced.edu} \\
	%% examples of more authors
	\And
	{Xianzhong Ding} \\
	Lawrence Berkeley National Laboratory\\
	CA 94720, USA \\
	\texttt{dingxianzhong@lbl.gov} \\
       % \And
	%{Mukesh Singhal} \\
	%University of California, Merced\\
	%CA 95340, USA \\
	%\texttt{msinghal@ucmerced.edu} \\
	%% \AND
	%% Coauthor \\
	%% Affiliation \\
	%% Address \\
	%% \texttt{email} \\
	%% \And
	%% Coauthor \\
	%% Affiliation \\
	%% Address \\
	%% \texttt{email} \\
	%% \And
	%% Coauthor \\
	%% Affiliation \\
	%% Address \\
	%% \texttt{email} \\
}
\begin{document}
\maketitle

\begin{abstract}
Multi-valued Byzantine agreement (MVBA) protocols are critical components in designing atomic broadcast and fault-tolerant state machine replication protocols in asynchronous networks. While these protocols have seen significant advancements, challenges remain in optimizing their communication and computation efficiency without sacrificing performance. In this paper, we address the challenge of achieving agreement in MVBA without incurring extra computation and communication rounds. Our approach leverages an analysis of message distribution patterns in asynchronous networks, observing that a subset of $f+1$ parties (where at least one party is honest), can achieve an agreement more efficiently than relying on all $n$ parties, where $n=3f+1$, $f$ is the maximum number of faulty parties. We introduce a novel protocol, Prioritized-MVBA (pMVBA), which integrates a committee-based selection process and the asynchronous binary Byzantine agreement (ABBA) protocol. In this design, a randomly selected subset of $f+1$ parties broadcast their requests, collect verifiable proofs, and utilize these proofs within the ABBA framework to reach an agreement. The proposed pMVBA protocol is resilient to up to $\lfloor \frac{n}{3} \rfloor$ Byzantine failures and achieves optimal performance, with an expected runtime of $O(1)$, message complexity of $O(n^2)$, and communication complexity of $O((l+\lambda)n^2)$ (similar to state-of-the-art protocols), where $n$ is the number of parties, $l$ is the input bit length, and $\lambda$ is the security parameter. The protocol attains $68\%$ higher throughput and $18\%$ lower latency than the classic MVBA protocol (the most efficient protocol).
\end{abstract}

% keywords can be removed
\keywords{Distributed Systems \and Blockchain \and Consensus Protocols}

\section{Introduction}
\label{sec:intro}

Byzantine Agreement (BA) protocols are fundamental to implementing decentralized infrastructures, particularly in the context of blockchain and decentralized applications. These protocols ensure agreement among parties, even in the presence of malicious actors, making them crucial for maintaining the integrity and functionality of distributed systems. The renewed interest in BA protocols has been driven by the success of Bitcoin \cite{SATOSHI08} and other decentralized applications \cite{SC}, which rely heavily on these protocols for achieving agreement \cite{CACHIN17}. Despite the existence of practical BA protocols \cite{ABRAHAM20,BYZ08,HONEYBADGER01,PASS17,SHRESTHA20,Hotstuff01}, their reliance on time parameters for ensuring liveness poses challenges in designing agreement protocols. The FLP impossibility result \cite{CONS03} demonstrates that deterministic BA protocols cannot guarantee agreement in asynchronous networks. This necessitates the development of efficient asynchronous Byzantine Agreement protocols \cite{BYZ17,DUAN18,FASTERDUMBO,HONEYBADGER01,LU20,SPEEDINGDUMBO,Joleton,BOLTDUMBO,DAG} that can overcome the limitations of existing approaches, particularly in terms of communication complexity.

Achieving agreement in asynchronous networks efficiently remains a significant challenge. Classic Multi-Valued Byzantine Agreement (MVBA) protocol like Cachin's \cite{CACHIN01}, while effective, suffer from high communication complexity, often involving terms like \(O(n^3)\). This high complexity is impractical for large-scale systems, necessitating new approaches to reduce communication overhead while ensuring optimal resilience and correctness. Cachin et al. \cite{CACHIN01} introduced the concept of MVBA, where nodes agree on a value from a large input space rather than a binary decision. Cachin’s MVBA protocol provides polynomial-time complexity but suffers from high communication overhead, with a complexity of \(O(n^3)\). VABA \cite{BYZ17} introduced a view-based approach to eliminate the \(O(n^3)\) term, while Dumbo-MVBA \cite{LU20} used erasure codes to handle large input sizes more efficiently, reducing the communication complexity to \(O(ln + \lambda n^2)\). However, these optimizations often introduced additional rounds of communication and computation.

In this context, we propose a novel approach to optimizing MVBA protocols by introducing the Prioritized Multi-Valued Byzantine Agreement (pMVBA) protocol. Our goal is to achieve agreement in each instance without the need for extra computation and communication rounds, maintaining optimal performance metrics. The key innovation of our approach is the integration of a committee-based method within the classical MVBA framework, where a randomly selected subset of parties, rather than the entire set of parties, broadcast proposals.

The pMVBA protocol leverages the concept of a committee, wherein a subset of \(f + 1\) parties is chosen to broadcast their proposals. This approach ensures that at least one honest party is always included in the subset, enhancing the protocol’s resilience. By dynamically selecting parties for each protocol instance, we mitigate the risk of adversarial attacks targeting specific nodes. This method significantly reduces the number of broadcast messages and the associated communication overhead. Central to our protocol is the integration of the Asynchronous Binary Byzantine Agreement (ABBA) protocol, which facilitates reaching an agreement on one of the proposals broadcast by the committee members. The ABBA protocol ensures that the selected parties’ proposals are agreed upon with probability $1$, maintaining the integrity and functionality of the distributed systems.

Our pMVBA protocol achieves several key improvements over existing MVBA protocols while exhibiting optimal resilience against Byzantine failures, with an expected runtime of \(O(1)\), optimal message complexity of \(O(n^2)\), and optimal communication complexity of \(O((l + \lambda)n^2)\), where \(n\) is the number of parties, \(l\) is the bit length of the input, and \(\lambda\) is the security parameter. The key improvements include removing the need for multiple instances of the protocol or extra rounds of messages and cryptographic computation to reach an agreement on a party's request while keeping the optimal communication complexity. These enhancements make the pMVBA protocol suitable for large-scale decentralized applications, addressing the scalability challenges faced by traditional MVBA protocols.
\begin{figure*}[h!]
     \centering
     \begin{subfigure}[b]{0.8\textwidth}
         \centering
         \includegraphics[width=\textwidth]{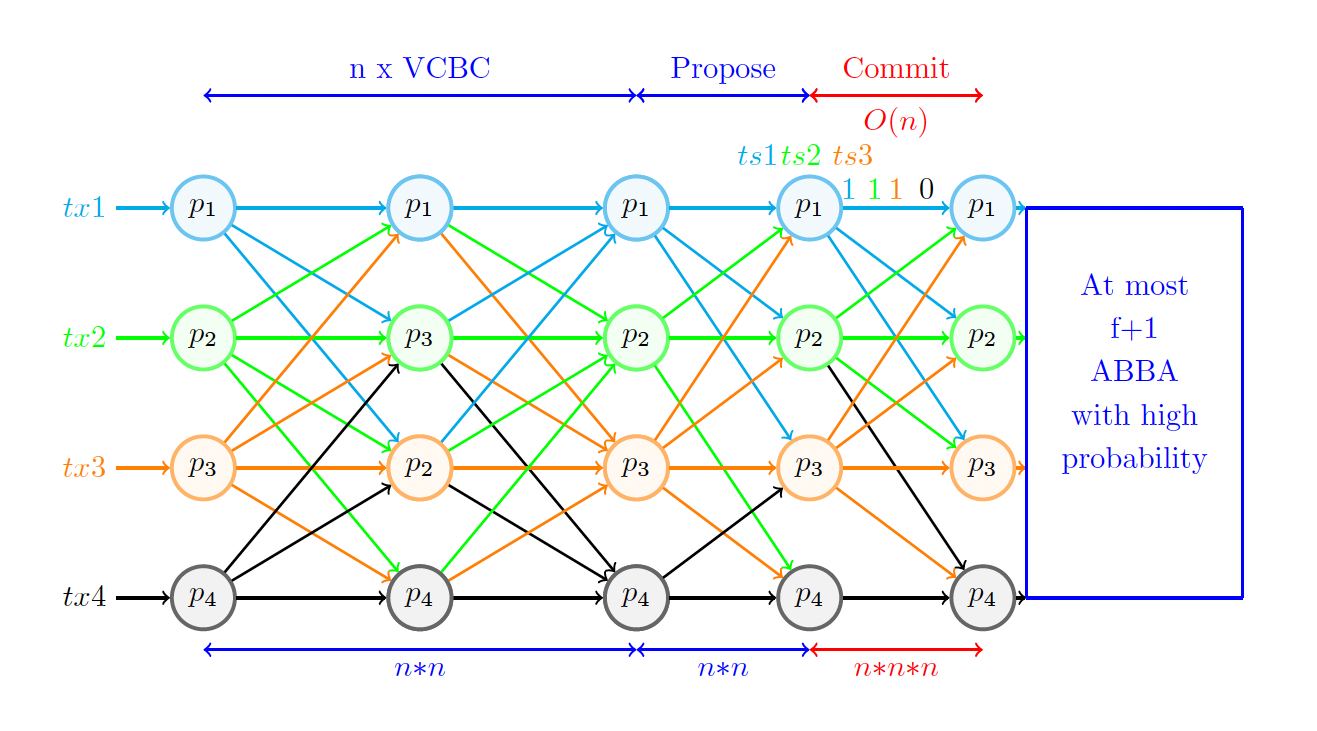}
         \caption{O(n) bottleneck}
         \label{fig:classicMVBA}
     \end{subfigure}
     
     \vfill
     \begin{subfigure}[b]{0.78\textwidth}
         \centering
         \includegraphics[width=\textwidth]{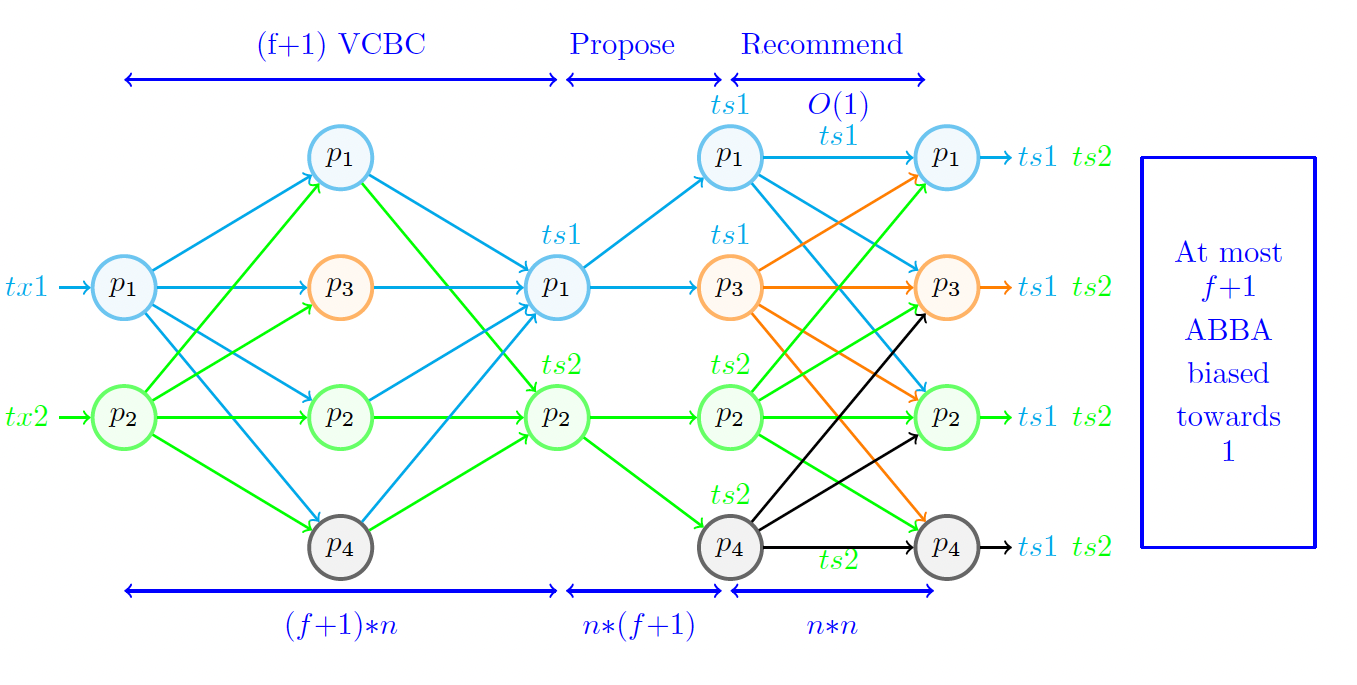}
         \caption{ No bottleneck}
         \label{fig:proposedMVBA}
     \end{subfigure}
     \caption{Comparison of two protocols. We show the time complexity of each sub-protocol and the step where our protocol wins in terms of communication. The $tx$ represents the transaction, the $ts$ represents the threshold signature. $1,1,1,0$ represents the bit array and each color for the particular party and its messages. }
     \label{fig:Comparison}
   \end{figure*}

The main contributions of this paper are as follows:
The main contributions of this project are as follows:
\begin{itemize}

\item We propose the pMVBA protocol, which reduces the communication complexity from \(O(n^3)\) \cite{SECURE03} to \(O((l + \lambda)n^2)\) by dynamically selecting a subset of parties using the cryptographic coin-tossing technique to broadcast their proposals.

\item We introduce a committee in the classical MVBA protocol \cite{SECURE03} that ensures that the protocol reaches an agreement with probability $1$, unlike other committee-based protocols. We replace the $O(n)$ bit \textit{Commit} step of the classic protocol with $O(1)$ \textit{Recommend} step.

\item We integrate the ABBA biased towards $1$ protocol to mitigate the effect of the reduction in the total number of proposals due to the introduction of the committee.

\item We present proof, extensive case studies, and experimental results to show that our approach (\textit{committee} and \textit{Recommend} step) fills the requirement of the ABBA protocol to reach an agreement on a committee member's proposal.  

%\item We are the first to implement the ABBA biased towards $1$ protocol.

%\item We present the performance evaluation, showing that the pMVBA protocol is effective and scalable for large-scale decentralized applications, outperforming the traditional MVBA protocols. The protocol attains $68\%$ higher throughput and $18\%$ lower latency than the classicMVBA \cite{SECURE03} protocol (the most efficient protocol).

\item We implement pMVBA and extensively test it among $ n = 4$, $7$ or $10$ nodes in gRPC based simulator, making detailed
Comparison to the state-of-the-art MVBA protocols, including  ClassicMVBA\cite{SECURE03} (CKSP01), and one recent result, VABA (Abraham et al. \cite{BYZ17}). At all system scales, the peak throughput of pMVBA is better than any of the other tested MVBA protocols (up to $180\%$). (see Table \ref{table:Table_INTRO}).

More importantly, the latency of pMVBA protocol is significantly
less than that of others (e. g. VABA \cite{BYZ17}, CKSP01 \cite{SECURE03}). Actually, the latency of pMVBA is nearly independent of its throughput, indicating the effectiveness of the protocol to resolve the throughput-latency tension lying in the previous protocols. As shown in Figure \ref{fig:LatencyIncrememnt}, we see that pMVBA has the lowest increment in terms of latency, and the increment ratio is also lowest among other protocols. This result indicates the effectiveness of the proposed MVBA protocol.

\begin{table}[htbp]
    
\caption{ In comparison with existing MVBA protocols.}
\begin{center}
\begin{tabular}{|c c c c c c c |} 
 \hline
\multirow{2}{4em}{Protocols} & \multicolumn{3}{ c }{Peak throughput (tps)} & \multicolumn{3}{ c |}{Latency at peak-tps(sec)}\\
& n=4 & n=7 & n = 10 & n=4 & n=7 & n = 10\\
\hline
VABA \cite{BYZ17} & 30.5  & 15.1  & 9 & 16.39 & 33 &55.35 \\
\hline
CKSP01 \cite{SECURE03} & 53.47  & 23.8 & 12 & 9.35 & 21 & 41.92\\
\hline
pMVBA (ours)  & 85.67 & 37.87  & 22 & 5.84 & 13.2 & 22.89\\
\hline
\end{tabular}
\label{table:Table_INTRO}
\end{center}
\end{table}

\begin{figure*}[htbp]
   \centering
         \includegraphics[width=0.7\textwidth]{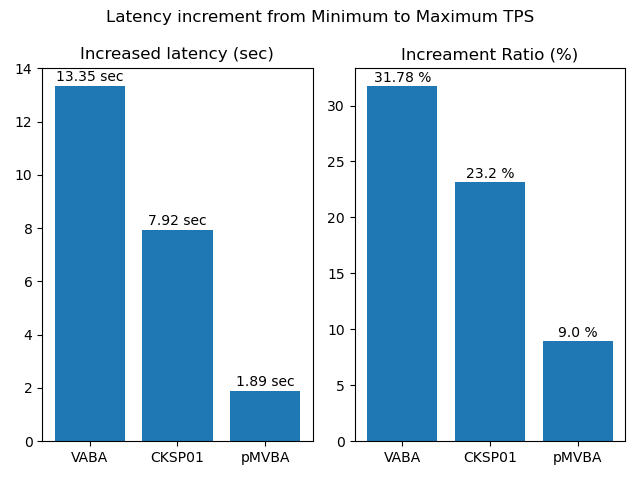}
         \caption{Latency increment of MVBA protocols  when throughput increases from minimum to maximum (n=10)}
         \label{fig:LatencyIncrememnt} 
\end{figure*}
\end{itemize}

\section{Related Work}
\label{sec:related}

This section provides an overview of the Byzantine Agreement protocols and the key developments in the Byzantine Agreement protocols, focusing on four main areas: Byzantine Agreement Protocols in partially-synchronous network, MVBA Protocols in asynchronous network, Committee-Based Protocols, and Optimistic Fastlane Mechanisms to combine both the partially-synchronous network and the asynchronous network. 

\noindent\textbf{Byzantine Agreement Protocols.}
Byzantine Agreement (BA) protocols have long been a cornerstone of fault-tolerant distributed computing. The Byzantine Generals Problem, introduced by Lamport et al. \cite{LAMPORT82}, laid the theoretical foundation for BA protocols, emphasizing the challenges of achieving an agreement in the presence of faulty or malicious nodes. Early BA protocols, such as those by Pease et al. \cite{PEASE80}, focused on synchronous networks where communication occurs in fixed rounds. However, these protocols often relied on time assumptions for ensuring liveness.

Practical Byzantine Fault Tolerance (PBFT), proposed by Castro and Liskov \cite{CASTRO99}, was a significant advancement in making BA protocols practical for real-world applications. PBFT introduced an efficient consensus protocol for partially-synchronous networks, ensuring safety and liveness even with up to \( f < \frac{n}{3} \) Byzantine faults. Several modern BA protocols build upon the principles of PBFT, including HotStuff \cite{HOTSTUFF19} and Tendermint \cite{TENDERMINT14}, optimizing various aspects such as communication complexity and responsiveness.

The FLP impossibility result \cite{FLP85} demonstrated that no deterministic BA protocol could guarantee an agreement in an asynchronous network with a single faulty node. This led to the exploration of probabilistic and randomized approaches for ABA. Ben-Or’s protocol \cite{BENOR83} was one of the earliest randomized BA protocols, using coin-flipping to break symmetry among nodes. Recent protocols like HoneyBadgerBFT \cite{MILLER16} and Dumbo \cite{DUMBO20} have focused on optimizing ABA for high throughput and low latency.

\noindent\textbf{Multi-Valued Byzantine Agreement Protocols.}
Cachin et al. \cite{CACHIN01} introduced the concept of MVBA, where nodes agree on a value from a large input space rather than a binary decision. Cachin’s MVBA protocol provides polynomial-time complexity but suffers from high communication overhead, with a complexity of \( O(ln^2 + \lambda n^2 + n^3) \).

Subsequent works, such as VABA \cite{BYZ17} and Dumbo-MVBA \cite{LU20}, aimed to reduce the communication complexity of MVBA. VABA introduced a view-based approach to eliminate the \( O(n^3) \) term, while Dumbo-MVBA used erasure codes to handle large input sizes more efficiently, reducing the communication complexity to \( O(ln + \lambda n^2) \). However, these optimizations often introduced additional rounds of communication and expensive cryptograhic computations.

Our focus on optimizing the classical MVBA protocol aims to address these challenges by reducing the number of broadcasting parties, thus lowering communication overhead while maintaining correctness and resilience.

\noindent\textbf{Committee-Based Protocols.}
Committee-based protocols have emerged as a promising approach to reduce communication complexity in BA protocols. Algorand \cite{BYZ21} introduces the concept of committee members, wherein a predefined set of parties collaborates to reach an agreement. Algorand operates within a permission-less (non-trusted) setting. In a similar vein, COINcidence \cite{COINCIDENCE} presents a protocol within a trusted setup, leveraging committee-based participation to achieve agreement with high probability.

Our approach diverges from both Algorand and COINcidence. Unlike Algorand, we operate within a trusted setup, and unlike both Algorand and COINcidence, our protocol involves selected parties undertaking a singular step. Furthermore, our protocol guarantees an agreement on a single party's requests with a probability of 1. Conversely, FasterDumbo \cite{FASTERDUMBO} utilizes the committee concept to minimize the number of components in the agreement process, yielding outputs from \( n-f \) parties' requests. Beat \cite{DUAN18} amalgamates the most effective elements of HoneyBadgerBFT's concrete implementation to showcase superior performance across varied settings.

\noindent\textbf{Optimistic Fastlane Mechanisms.}
Certain works, such as those by Victor et al. \cite{VICTOR01} and Cachin et al. \cite{CACHIN02}, have explored the incorporation of an optimistic 'fastlane' mechanism, designed for operation within partially synchronous networks. However, in instances of sluggish agreement processes, these mechanisms transition to asynchronous Byzantine agreement protocols. Recent endeavors, such as those by Joleton \cite{Joleton} and Boltdumbo \cite{BOLTDUMBO}, have adopted a similar strategy. Our work remains pertinent within this context as we rely on asynchronous protocols when the network exhibits Byzantine behavior, particularly within the pessimistic path.

Given our protocol's focus on outputting requests from a single party, we omit comparisons with DAG-rider \cite{DAG} and Aleph \cite{ALEPH}, which utilize directed acyclic graphs for agreement alongside sequential ACS. Additionally, recent efforts have aimed to eliminate the private setup phase in asynchronous Byzantine agreement protocols, emphasizing asynchronous distributed key generation \cite{DIS01,DIS02,DIS03}.

\section{System Model}
We use the standard notion \cite{SECURE02,CACHIN01} to describe a distributed algorithm involving \( n \) parties and an adversary in an authenticated setting. This section explains the foundational assumptions, network model, adversary model, computational model, and cryptographic tools used in our protocol.

\subsection{System and Network Assumptions}
\noindent\textbf{Parties and Their Setup:}
There are \( n \) parties, denoted as \( p_1, p_2, \ldots, p_n \). We consider a trusted setup of threshold cryptosystems. The system provides a secret key and a public key to each party before the protocol starts. A party uses its secret key to sign a message, and other parties use the corresponding public key to verify the signed message. The generation and distribution of these keys are out of the scope of this paper. We follow standard literature for key generation and distribution and refer interested readers to \cite{BORN01,BOLD01,Kate01}. We use the node and party alternatively throughout the paper.

\noindent\textbf{Network Model:}
Parties are connected via reliable, authenticated point-to-point channels. Here, reliable implies that if an honest party \( p_i \) sends a message to another honest party \( p_j \), the adversary cannot modify or drop the message. However, the adversary can delay the message delivery to influence the protocol execution time. Since we consider an adaptive adversary that can corrupt a party at any time during the protocol execution, the adversary can corrupt a party \( p_i \) after it sends a message and then make the party \( p_i \) drop the message.

\noindent\textbf{Adversary Model:}
We consider an adaptive adversary that can corrupt any party during the protocol execution. If the adversary takes control of a party, the party is corrupted, reveals its internal state to the adversary, behaves arbitrarily, and remains corrupted. An honest party follows the protocol, keeps its internal state secret from the adversary, and remains uncorrupted throughout the protocol execution. The adversary can corrupt \( f \) parties among the \( n \) parties, where \( f < \frac{n}{3} \).

\noindent\textbf{Computational Model:}
We adopt standard modern cryptographic assumptions and definitions from \cite{SECURE02,CACHIN01}. The assumptions allow the parties and the adversary to be probabilistic polynomial-time interactive Turing machines. This means that upon receiving a message, a party carries out some computations, changes its state, generates outgoing messages if required, and waits for the next incoming message. To rule out infinite protocol executions and restrict the run time of the adversary, we require the message bits generated by honest parties to be probabilistic uniformly bounded by a polynomial in the security parameter \( \lambda \). Therefore, we assume that the number of parties \( n \) is bounded by a polynomial in \( \lambda \).

\subsection{Validated Asynchronous Byzantine Agreement:}
Multi-valued Byzantine Agreement (MVBA) allows parties to agree on an arbitrary string \( \{0,1\}^l \), which must satisfy a predefined validity condition before the honest parties agree on it. This protocol ensures that the agreed-upon values are valid even when inputs come from malicious parties. This property is crucial for ensuring the protocol's correctness. Therefore, it is called the validated asynchronous byzantine agreement \cite{CACHIN01, VABA, BYZ20}.
\newline
\noindent{\textbf{Multi-valued Byzantine Agreement}} \label{MVBAD}
The MVBA protocol provides a polynomial-time computable predicate $Q$. Each party inputs a value $v$ that must satisfy the condition specified by $Q$, which is application-dependent. The MVBA protocol guarantees the following properties with negligible probability of failure:

\begin{itemize}
    \item \textbf{Liveness.} If all honest parties are activated, then all honest parties will reach a decision.
    \item \textbf{External Validity.} If an honest party outputs a value $v$, then $v$ satisfies the predicate $Q(v) = \text{true}$.
    \item \textbf{Agreement.} If two honest parties output values $v$ and $v'$, respectively, then $v = v'$.
    \item \textbf{Integrity.} If honest parties output a value $v$, then $v$ must have been input by a party.
    \item \textbf{Efficiency.} The number of messages generated by honest parties is uniformly bounded with high probability.
\end{itemize}

\subsection{Preliminaries}
\noindent\textbf{Asynchronous Binary Byzantine Agreement Biased Towards 1:}
The ABBA protocol allows parties to agree on a single bit \( b \in \{0,1\} \).

The ABBA protocol guarantees the following properties. Additionally, the biased external validity property applies to the biased ABBA protocol.

\begin{itemize}
    \item \textbf{Agreement.} If an honest party outputs a bit $b$, then every honest party outputs the same bit $b$.
    \item \textbf{Termination.} If all honest parties receive input, then all honest parties will output a bit $b$.
    \item \textbf{Validity.} If any honest party outputs a bit $b$, then $b$ was the input of at least one honest party.
    \item \textbf{Biased External Validity.} If at least $\langle f + 1 \rangle$ honest parties propose $1$, then any honest party that terminates will decide on $1$.
\end{itemize}

\noindent\textbf{(1, \(\kappa\), \(\epsilon\))- Committee Selection:}
The committee selection protocol is executed among \( n \) parties (identified from 1 through \( n \)). This protocol ensures that an honest party outputs a \(\kappa\)-size committee set \( C \) with at least one honest member, given that at least \( f+1 \) honest parties participate. A protocol is a $(1, \kappa, \epsilon)$-Committee Selection protocol if it satisfies the following properties with negligible probability of failure in the cryptographic security parameter $\lambda$:

\begin{itemize}
    \item \textbf{Termination.} If $\langle f+1 \rangle$ honest parties participate in the committee selection and the adversary delivers the messages, then the honest parties will output the set $C$.
    \item \textbf{Agreement.} Any two honest parties will output the same set $C$.
    \item \textbf{Validity.} If any honest party outputs the set $C$, then: 
        \begin{itemize}
            \item (i) $|C| = \kappa$,
            \item (ii) the probability of every party $p_i$ being in $C$ is the same, and 
            \item (iii) $C$ contains at least one honest party with probability $1-\epsilon$.
        \end{itemize}
    \item \textbf{Unpredictability.} The probability of the adversary predicting the committee $C$ before any honest party participates is at most $\frac{1}{^nC_\kappa}$.
\end{itemize}

Guo et al. \cite{FASTERDUMBO} constructed the $(1, \kappa, \epsilon)$-Committee Selection protocol using a threshold coin-tossing mechanism (see Appendix \ref{TCT}), which is derived from threshold signatures. The protocol ensures that at least one honest party is a committee member with overwhelming probability $1-\epsilon - \text{neg}(\lambda)$, where $\text{neg}(\lambda)$ is a negligible function in the cryptographic security parameter $\lambda$, and $\epsilon$ is $exp(-\Omega\kappa)$.
% move to evaluation section

\noindent\textbf{Cryptographic Abstractions:}
We design a distributed algorithm in authenticated settings where we use robust, non-interactive threshold signatures to authenticate messages and a threshold coin-tossing protocol to select parties randomly \cite{SECURE02}.

\begin{enumerate}
    \item \textbf{Threshold Signature Scheme:}
    We utilize a threshold signature scheme introduced in \cite{THRESH01,SECURE02}. The basic idea is that there are \( n \) parties, up to \( f \) of which may be corrupted. The parties hold shares of the secret key of a signature scheme and may generate shares of signatures on individual messages. \( t \) signature shares are both necessary and sufficient to construct a signature where \( f < t \leq (n-f) \). The threshold signature scheme also provides a public key \( pk \) along with secret key shares \( sk_1, \ldots, sk_n \), a global verification key \( vk \), and local verification keys \( vk_1, \ldots, vk_n \). Initially, a party \( p_i \) has information on the public key \( vk \), a secret key \( sk_i \), and the verification keys for all the parties' secret keys. We describe the security properties of the scheme and related algorithms in Appendix \ref{TSS}.
    
    \item \textbf{Threshold Coin-Tossing Scheme:}
    The threshold coin-tossing scheme, as introduced in \cite{THRESH01,SECURE02}, involves parties holding shares of a pseudorandom function \( F \) that maps the name \( C \) (an arbitrary bit string) of a coin. We use a distributed pseudorandom function as a coin that produces \( k'' \) random bits simultaneously. The name \( C \) is necessary and sufficient to construct the value \( F(C) \in \{0,1\}^{k''} \) of the particular coin. The parties may generate shares of a coin — \( t \) coin shares are both necessary and sufficient where \( f < t \leq n-f \), similar to threshold signatures. The generation and verification of coin-shares are also non-interactive. We describe the security properties of the scheme and related algorithms in Appendix \ref{TCT}.
\end{enumerate}

\section{Design of pMVBA}
\subsection{pMVBA Overview}
In this section, we present a protocol for multi-valued Byzantine agreement capable of tolerating up to \( f < \frac{n}{3} \) Byzantine faults. The protocol features an expected communication bit complexity of \( O(ln^2 + \lambda n^2) \) and an expected constant asynchronous round count. Our modular implementation comprises five distinct sub-protocols: Committee Selection (CS), Prioritized Verifiable Consistent Broadcast (pVCBC), Propose-Recommend, Random Order/Permutation Generation, and Sequential-ABBA. The framework of the proposed protocol is illustrated in Figure \ref{fig:Framework}.

\subsection{Committee Selection}
In the proposed pMVBA protocol, the CS sub-protocol plays a critical role in optimizing efficiency by reducing the communication complexity inherent in Byzantine agreement protocols. Rather than requiring all $n$ parties to broadcast their proposals, our method strategically selects a subset of at least $f + 1$ parties, based on the party ID and the current instance of the protocol, to perform this task. This selection ensures the protocol’s progress while maintaining its security properties, with the output being a set of $f + 1$ parties designated as committee members for that specific instance.

To achieve this, the CS protocol utilizes a cryptographic coin-tossing scheme, inspired by Dumbo \cite{FASTERDUMBO}, to dynamically and randomly select $\kappa = f + 1$ parties for each protocol instance. This selection process guarantees that at least one honest party is included in the committee, thereby preserving the integrity of the protocol.

The dynamic nature of party selection mitigates several risks, including adversarial corruption, ensuring fair participation across all parties and reducing the likelihood of starvation and Denial-of-Service (DoS) attacks. The randomness introduced by the cryptographic coin-tossing scheme is crucial in preventing adversary influence and securing the CS process.

The CS protocol is illustrated in Algorithm \ref{algo:cs} and involves the following steps:

\begin{itemize}
    \item \textit{Generating Coin-Shares:} Upon invocation, each party generates a coin-share $\sigma_i$ and broadcasts it to all parties. The party then waits to receive $f + 1$ coin-shares. (lines 3-5)
    \item \textit{Verifying Coin-Shares:} Upon receiving a coin-share from another party $p_k$, the receiving party verifies the coin-share and adds it to the set $\Sigma$ until $f + 1$ valid shares are collected. (lines 8-10)
    \item \textit{Selecting Parties:} After collecting $f + 1$ valid coin-shares, the $CToss$ function is used to determine the selected subset of parties. (lines 6-7)
\end{itemize}

\begin{algorithm}[htbp]
\LinesNumbered
\DontPrintSemicolon
\SetAlgoNoEnd
\SetAlgoNoLine

\SetKwProg{LV}{Local variables initialization:}{}{}
\LV{}{
   $\Sigma \leftarrow \{\}$\;
   $CUR\_INSTANCE \leftarrow 0$\;
}

\SetKwProg{un}{upon}{ do}{}
\un{$CS \langle ID\rangle $ invocation}
{
  $CUR\_INSTANCE \leftarrow ID.INSTANCE$\;
  $id \leftarrow ID.id$ \;
  $\sigma _{id}$ $\leftarrow$ $CShare\langle id \rangle$ \;
  
  \textbf{send} $\langle SHARE, id, \sigma _{id}, CUR\_lINSTANCE \rangle$ to all parties\;
  \textbf{wait until} $|\Sigma| = f+1$\;
  \KwRet $CToss \langle id, \Sigma \rangle$\;
}

\un{receiving $\langle SHARE, id, \sigma _k, INSTANCE \rangle$ from a party $p_{k}$ for the first time}
{
 \uIf{ $CUR\_INSTANCE = INSTANCE$ and $CShareVerify\langle id, k, \sigma _k\rangle = true$ }{
    $\Sigma \leftarrow \Sigma \cup \{\sigma _k \}$}
}

\caption{Committee Selection for party $p_i$}
\label{algo:cs}
\end{algorithm}

\begin{figure*}[htbp]
     \centering
   \begin{subfigure} [b]{0.45\textwidth}
       \centering
       \includegraphics[width=0.9\textwidth]{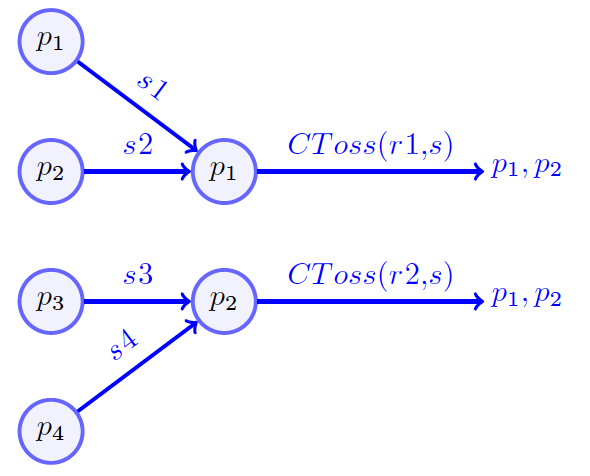}
         \caption{Committee-Selection protocol. Each party shares their $s_i = CShare(r_i)$, and a party requires $f + 1$ shares. The $CToss(r_1,s)$ function is used to select the parties, ensuring that each party gets the same set. This illustration shows the process for two parties, but it applies universally.}
         \label{fig:CS}
   \end{subfigure} \hspace{0.8em}
    \begin{subfigure}[b]{0.45\textwidth}
         \centering
        % \right
         \includegraphics[width=0.9\textwidth]{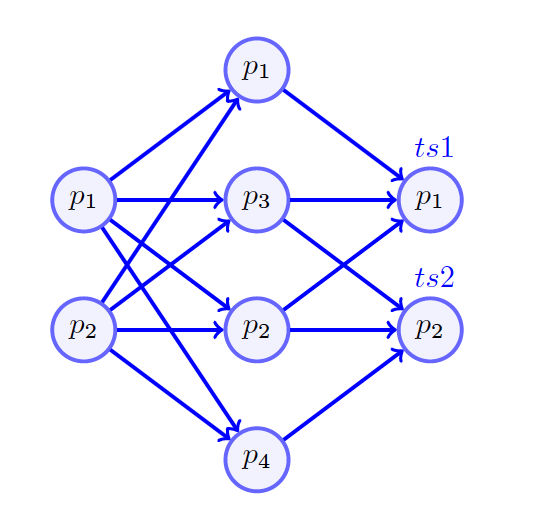}
         \caption{pVCBC protocol. Each selected party (here $p_1$ and $p_2$ ) proposes its requests. Each receiving party verifies the requests, adds its sign-share to the message, and returns it to the sender. A party waits for 2f+1 sign-shares (here it is 3 because f = 1). The 2f+1 sign shares are necessary to create a threshold-signature ($ts1$).  }
         \label{fig:pVCBC}
     \end{subfigure}
     \caption{Committee Selection and pVCBC protocol}
\end{figure*}

The CS protocol, as depicted in Figure \ref{fig:CS}, ensures a secure and efficient selection of broadcasting parties, which significantly reduces the overall communication complexity of the pMVBA protocol. Definitions of the cryptographic coin-tossing functions (e.g., $CShareVerify$) are provided in Appendix \ref{TCT}.

\subsection{pVCBC} Following the selection of committee members through the Committee Selection protocol, each selected member must provide a verifiable proof of their proposal to ensure that it has been broadcast to at least $f+1$ honest parties. Specifically, the input for this protocol includes the ID, requests, and the selected parties, while the output is a threshold-signature—a verifiable proof that the same requests have been sent to at least $f+1$  honest parties. This proof is crucial for maintaining the integrity and consistency of the protocol, as it verifies that the proposal has been correctly disseminated among the parties. Traditionally, the Verifiable Consistent Broadcast (VCBC) protocol is used to generate such proofs, enabling every party to produce a verifiable record of their broadcast proposals. The definition of verifiability and the VCBC protocol itself are detailed in Appendix \ref{VCBC}. However, since our protocol restricts broadcasting to only the selected committee members, we utilize a slightly modified version of the VCBC protocol, which we term pVCBC (Prioritized Verifiable Consistent Broadcast).

The pVCBC protocol aligns with the selective broadcasting approach from the Committee Selection process. This adaptation ensures that when a party receives verifiable proof from another, there's no need to verify its origin from a selected committee member, as the protocol inherently guarantees this. This streamlines verification and reduces unnecessary checks, preserving the efficiency introduced by Committee Selection. The pVCBC protocol's construction is detailed in Algorithm \ref{algo:per}, and an illustration is provided in Figure \ref{fig:pVCBC}, visually representing its interactions and verification steps as a critical component of the overall pMVBA protocol.

\paragraph*{Construction of the pVCBC} \label{pVCBC}
\begin{itemize}
    \item Upon invocation of the pVCBC protocol, a party sends (ID, requests, SelectedParties) to the parties. (lines 03).
    \item Upon receiving a message (ID, requests, SelectedParties) from a party $p_k$, a party checks whether the sender is a selected party. If the sender is a selected party, the party adds its signature share ($sign-share$) to the message, resulting in $\sigma$. The party then replies with $\sigma$ to the sender (lines 11-14).
    \item Upon receiving a signature share $\sigma_k$ from a party $p_k$, a selected party adds the signature share $\sigma_k$ to its set $\Sigma$ (lines 07 - 09).
    \item A selected party waits for $\langle n-f \rangle$ valid signature shares. Upon receiving $ n-f $ signature shares, the party combines them to generate a threshold signature $\rho$ (proof that the party has sent the same request to at least $ f+1 $ honest parties) and returns $\rho$ to the caller (line 05).
\end{itemize}

\begin{algorithm}[htbp]
%\SetAlgoLined
\LinesNumbered
\DontPrintSemicolon
\SetAlgoNoEnd
\SetAlgoNoLine

\SetKwProg{un}{upon}{ do}{}
\un{pVCBC$\langle ID, requests, SelectedParties\rangle$ invocation} 
{
      $\Sigma \leftarrow \{\}$ \;
      %$msg \leftarrow createMsg\langle ID, requests\rangle$\;
    %  $msg_{signed} \leftarrow authenticate\langle msg \rangle$\;
     \textbf{send} $\langle ID, requests \rangle$ to all parties\;
     \textbf{wait until} $|\Sigma| = n-f$\;
     \KwRet $\rho \leftarrow CombineShare_{id}\langle requests, \Sigma \rangle$\;
  
}
\;

\SetKwProg{un}{upon}{ do}{}
\un{receiving$\langle requests, \sigma_{k}\rangle $ from the party $p_{k}$ for the first time} {
\uIf{$VerifyShare_{k}\langle requests, (k, \sigma_k)  \rangle$ }{
    $\Sigma \leftarrow {\Sigma  \cup \{\sigma_k\}}$
 }
}
\;
\un{receiving $\langle ID, requests\rangle $ from the party $p_{j}$ for the first time} {
\uIf{$p_j \in SelectedParties$ }{
   %$requests \leftarrow msg.requests$\;
   $\sigma_{id} \leftarrow SigShare_{id} \langle sk_{id}, requests \rangle$\;
   $reply \langle requests,  \sigma_{id} \rangle$\;
 }
}
\;
\;

\SetKwProg{LV}{Local variables initialization:}{}{}
\LV{}{
   $\Sigma \leftarrow \{\}$\;
}

\SetKwProg{un}{upon}{ do}{}
\un{$RandomOrder \langle ID, SelectedParties\rangle $ invocation}
{
  $\sigma _i$ $\leftarrow$ $CShare\langle r_{id} \rangle$ \;
  $id = ID.id$\;
  $\textbf{send} \langle SHARE, id, \sigma _{id}, ID.INSTANCE \rangle$ to all parties\;
  \textbf{wait until} $|\Sigma| = 2f+1$\;
  
  \KwRet filterSelectedParties( $CToss \langle r_{id}, \Sigma \rangle, SelectedParties$)\;
}
\;
\SetKwProg{un}{upon receiving}{ do}{}
\un{$ \langle SHARE, id, \sigma _k, INSTANCE \rangle$ from a party $p_{k}$ for the first time}
{
 \uIf{$CShareVerify\langle r_{k}, k, \sigma _k\rangle = true$ }{
    $\Sigma \leftarrow { \Sigma \cup  \{\sigma _k\}}$}
}

\;
\;
\caption{pVCBC and RandomOrder protocols for party  $p_i$}
\label{algo:per}
\end{algorithm}

\subsection{Propose-Recommend}

In the pMVBA protocol, the Propose-Recommend step replaces the traditional commit step found in the classical MVBA protocol. This shift is central to enhancing the protocol's efficiency by significantly reducing complexity. While the classic commit step involves collecting \(n - f\) verifiable proofs and creating an array of length \(n\), our recommend step simplifies the process by broadcasting verifiable proofs directly. The propose step remains largely similar to the one in the classic protocol, but with a key difference: only the selected parties, as determined by the CS process, are responsible for proposing their requests. The input for this step includes a proposal and a threshold-signature, but it does not produce any direct output. %This propose step is integrated into the driver protocol \textit{Prioritized-MVBA} (Algorithm \ref{alg:PMVBA}), maintaining coherence with the preceding pVCBC process.

The recommend step then follows, where the input consists of the proposal and the threshold-signature. The output of this step is a list of threshold-signatures (collected from the \(n-f\) recommendation message), ensuring that at least one verifiable proof reaches the majority of parties. This process is critical for moving the protocol towards an agreement.

The recommend step involves the following key actions:

\begin{itemize}
    \item \textit{Initialization:} A party initializes an empty set $\Sigma$. (line 2)
    \item \textit{Broadcast Recommendation:} A party creates a recommendation message and broadcasts it to all parties. (line 6)
    \item \textit{Recommendation Collection:} Upon receiving a recommendation, a party verifies the threshold-signature and instance. If valid, the party adds the proposal and $\sigma$ to its set $\Sigma$. (lines 11-14)
    \item A party waits for $n - f$ recommendation messages before proceeding. (line 6-7)
\end{itemize}

The detailed SCR (Send, Collect, Recommend) protocol is shown in Algorithm \ref{alg:RC}, which formalizes the steps involved in the recommend process.

\begin{algorithm}[htbp]
\SetAlgoLined
\DontPrintSemicolon
\textbf{Local variables initialization:}\;
\LinesNumbered
$\Sigma \leftarrow \{\}$\;
$CUR\_INSTANCE \leftarrow 1$\;
$rCount \leftarrow 0$\;

\SetKwProg{un}{upon}{ do}{}
\un{$SCR\langle ID, ID_k, Proposal, \rho \rangle$ invocation}
{
  \textbf{send} $\langle RECOMMENDATION, ID_k, Proposal, \rho \rangle$ to all parties\;
  $CUR\_INSTANCE \leftarrow ID.INSTANCE$\;
  $rCount \leftarrow 0$\;
  \textbf{wait until} $rCount = n-f$\;
  \KwRet $\Sigma$\;
}

\un{receiving $\langle RECOMMENDATION, ID_j, Proposal, \rho \rangle$ from party $p_{j}$ for the first time} {
\uIf{  $ID.INSTANCE = CUR\_INSTANCE$ $\wedge$ $Verify_{id}\langle Proposal, \rho \rangle = true$ $\wedge ID_j.id \notin \Sigma$}{
    $\Sigma \leftarrow {\Sigma \cup  \{ID_j.id: \langle Proposal, \rho \rangle\} }$\;
    
    $rCount \leftarrow rCount + 1$\;
 }
}

\caption{Recommend Protocol for all parties}\label{alg:RC}
\end{algorithm}

\subsection{Random Order}
Following the Propose-Recommend step, the next crucial phase in the pMVBA protocol is to determine the order in which the selected parties will proceed. The purpose of this Random Order step is to generate a random permutation of the selected parties, ensuring that the process remains fair and secure. This step is particularly important for preventing adversaries from manipulating the message delivery order, which could potentially increase the number of asynchronous rounds required to reach an agreement.

The input for this step includes the party ID and the current protocol instance, and the output is a permutation of the selected parties. The random order is generated after the distribution and receipt of the threshold number of recommendation messages, ensuring the process is not vulnerable to adversarial interference. The Random Order protocol involves the following steps :

\begin{itemize}
    \item \textit{Coin-Share Generation:} Upon invocation of the RandomOrder protocol, a party generates a coin-share $\sigma_i$ for the current instance and broadcasts it to every party. The party then waits for $2f+1$ coin-shares. (lines 19-24)
    \item \textit{Coin-Share Verification:} When a party receives a coin-share from another party $p_k$ for the first time, it verifies the coin-share, ensuring it is from $p_k$, and accumulates the coin-share in the set $\Sigma$. The party continues to respond to coin-shares until it has received $2f+1$ valid shares. (lines 27-28)
    \item \textit{Permutation Generation:} Once $2f+1$ valid coin-shares have been received, the party uses its $CToss$ function and the collected coin-shares to generate a permutation of the $n$ parties and filtered the selected parties and return. (line 24 of Algorithm \ref{algo:per}).
\end{itemize}

This process is formalized in the pseudocode for the random-order generation protocol, as shown in Algorithm \ref{algo:per}. By generating a random order of the selected parties, this step ensures that the protocol remains robust against adversarial manipulation and maintains the efficiency of the agreement process.

\subsection{Sequential-ABBA}

The final phase in the pMVBA protocol is the Sequential-ABBA protocol, which is responsible for reaching an agreement on one of the proposals submitted by the selected parties. Building on the random order generated in the previous step, the Sequential-ABBA protocol runs an agreement loop that systematically evaluates each selected party's proposal. The input to this protocol consists of the permutation list generated in the Random Order step and the list of recommendations obtained from the Propose-Recommend step. The output is the final proposal agreed upon by the parties.

In Sequential-ABBA, we employ an asynchronous binary Byzantine agreement protocol that is biased towards $1$, allowing for efficient agreement on proposals. The protocol iterates through the permutation list, running the agreement process for each selected party until an agreement is achieved on a single proposal. This approach ensures that the final decision is reached in a manner that is both robust and fair, in line with the protocol's overall design. The detailed construction of the Sequential-ABBA protocol is provided below, where the specific steps and mechanisms are described. This protocol serves as the culminating step in the pMVBA process, ensuring that all parties converge on a single, agreed-upon proposal, thus completing the agreement procedure.

\paragraph*{Construction of the Sequential-ABBA} \label{Sequential-ABBA}
Since the Sequential-ABBA protocol guarantees that the parties will eventually reach an agreement on one of the selected party's proposals, the protocol runs a loop for the selected parties until it agrees on a valid proposal. The pseudocode of the Sequential-ABBA protocol is given in Algorithm \ref{algo:Sequential-ABBA}, and a step-by-step description is provided below:

\begin{itemize}
    \item The Sequential-ABBA protocol takes two arguments: (i) PermutationList, the permutation of the selected parties, and (ii) RecommendationList, the list of verifiable proofs and proposals the party has received (line 1).
    \item Upon invocation of the protocol, a party declares two variables: (i) $index$, an index number to access the selected parties one by one from the array PermutationList, and (ii) $bit$, initially set to zero to indicate that the parties have not reached any agreement (lines 2-3).
    \item \textbf{While loop:} Parties are chosen one after another according to the permutation PermutationList of $\{1, \ldots, f+1\}$. Let $index$ denote the index of the party selected in the current loop (the selected party $p_{index}$ is called the candidate). Each party $p_i$ follows these steps for the candidate $p_{index}$ (lines 4-25):
    \begin{itemize}
        \item Broadcasts a $VOTE$ message to all parties containing $u_{index} = 1$ if party $p_i$ has received $p_{index}$’s proposal and verifiable proof (including the proposal in the VOTE or from RecommendationList), and $u_{index} = 0$ otherwise. (lines 12-18)
        \item Waits for $\langle n-f \rangle$ $VOTE$ messages (line 19) but does not count votes indicating $u_{index} = 1$ unless a valid proposal from the party $p_{id}$ has been received—either directly or included in the $VOTE$ message (lines 21 - 22).
        \item Runs a binary asynchronous Byzantine agreement biased towards $1$ (see Algorithm \ref{algo:ABBA}) to determine whether $p_{index}$ has properly broadcast a valid proposal. Vote $1$ if $p_i$ has received a valid proposal from $p_{index}$ through RecommendationList and add the protocol message that completes the verifiable broadcast of $p_{index}$’s proposal to validate this vote. Otherwise, if $p_i$ has received $n-f$ $VOTE$ messages containing $u_{index} = 0$, then vote $0$; no additional information is needed. If the agreement loop decides $1$, exit the loop. (lines 21-25)
    \end{itemize}
    \item Upon reaching an agreement, if the proposal is empty, then use the threshold signature $\rho$ to get the proposal from the selected party and return the proposal. (lines 26-28)
    
    \item Upon receiving $\langle ID, VOTE, party, u_{index}, m \rangle$, a party checks whether $u_{index'} = 1$. If it receives $u_{index'} = 1$, then it assigns $m_{index} = m$. (lines 29-33)
\end{itemize}

\begin{algorithm}[htbp]
\DontPrintSemicolon
\SetAlgoNoEnd
\SetAlgoNoLine
\SetKwProg{un}{upon receiving}{ do}{}

\SetKwProg{un}{upon}{ do}{}
%\SetKwProg{Fn}{function}{}{}
\un{Sequential-ABBA$\langle ID, PermutationList, RecommendationList \rangle$ invocation}
{
  $index \leftarrow 0$\;
  $bit \leftarrow 0$\;

 \While{bit=0}{
    $\Sigma \leftarrow \{\}$\;
    
    $index \leftarrow index + 1$\;

    $proposal \leftarrow \perp$\;
    $\rho \leftarrow \perp$\;
    
    $id \leftarrow PermutationList [index] $\;
    $u_{index} \leftarrow 0$\;
    $m_{index} \leftarrow <proposal, \rho>$\;
    
   % $proposal_{index} \leftarrow \perp;$   where $1\leq index \leq f+1 $\; 

     \uIf{$id \in RecommendationList$}{

        $u_{index} \leftarrow 1$ \;
        %$m_{index} \leftarrow
        $<proposal, \rho> \leftarrow RecommendationList[id]$\;
        $m_{index} \leftarrow <proposal, \rho>$\;
        \textbf{send} $\langle ID, VOTE, party_{id}, u_{index}, m_{index} \rangle$ to all parties\; %\Comment{$vote=1$}
    }\uElse{
        %$proposal = \perp$\;
        \textbf{send} $\langle ID, VOTE, party_{id}, u_{index}, \perp \rangle$ to all parties\;
       
    }

    % $m_{index} \leftarrow <proposal, \rho>$\;
    \textbf{ wait until} $|\Sigma|$ $= n-f$ \;
     \;
    
    \uIf{$ u_{index}=1$}{
     $v \leftarrow \langle 1, <proposal, \rho> \rangle$\;
    }\uElse{
     $v \leftarrow \langle 0, \perp \rangle$\;
    }
    $\langle bit, \rho \rangle \leftarrow ABBA\langle v \rangle$ biased towards $1$\;
   }

   \uIf{$proposal = \perp$} {
     use $\rho$ to complete the verifiable authenticated broadcast and deliver the proposal. See Appendix \ref{VCBC}
   }
 return proposal \;
}

\un{receiving $\langle ID, VOTE, p_k, u_{index'}, m \rangle$ for the first time from party $p_k$} {
      \uIf{$u_{index'} = 1$ }{
    $u_{index} \leftarrow 1$ \;
    $m_{index} \leftarrow m$\;
 }

 $\Sigma \leftarrow \Sigma \cup 1 $\;
}

\caption{Protocol for party  $p_i$ }
\label{algo:Sequential-ABBA}
\end{algorithm}

%  LocalWords:  biometrics cryptographic parallelized lossy

\subsection{Integration of Subprotocols}
The pMVBA protocol achieves an agreement in a Byzantine environment through a sequence of interconnected steps. The process begins with the Committee Selection (CS) protocol, where each party dynamically and randomly selects committee members (Algorithm \ref{alg:PMVBA}, line 10). Following this, the selected parties broadcast their proposals using the Prioritized Verifiable Consistent Broadcast (pVCBC) protocol, ensuring that their broadcasts are verifiable (Algorithm \ref{alg:PMVBA}, line 13). After the proposals are broadcast, the parties proceed with the Propose-Recommend phase, where verifiable proofs are shared, and recommendations are collected (Algorithm \ref{alg:PMVBA}, lines 23-27). Upon receiving the necessary recommendations, a random order of the selected committee members is generated using a cryptographic coin-tossing scheme (Algorithm \ref{alg:PMVBA}, line 18). Finally, the Sequential-ABBA protocol is executed, where the agreement process is run in the determined order until an agreement is reached on a proposal (Algorithm \ref{alg:PMVBA}, lines 20-21). Each of these steps ensures the protocol operates efficiently and securely, even in the presence of Byzantine faults.

\begin{algorithm}[htbp]
\LinesNumbered
\DontPrintSemicolon
\SetAlgoNoEnd
\SetAlgoNoLine

\SetKwProg{LV}{Local variables initialization:}{}{}
\LV{}{
$ INSTANCE \leftarrow 1$\;
$ SelectedParties \leftarrow \bot$\;
$\rho (threshold-signature)\leftarrow \bot$\;
$Proposal \leftarrow \bot$\;
$RecommendationList \leftarrow \bot$\;
$PermutationList \leftarrow \bot$ \;
}

\SetKwProg{un}{upon}{ do}{}
\SetKwProg{Fn}{function}{}{}
\While{true}
{
   $ID \leftarrow {\langle INSTANCE, id \rangle}$\;
   $SelectedParties \leftarrow CS \langle ID, INSTANCE \rangle$\;

    \uIf{$p_i \in SelectedParties$}{
       $Proposal \leftarrow requests$\;
       $ \rho  \leftarrow pVCBC \langle ID, Proposal\rangle$\;
       $\textbf{send}\langle PROPOSE, ID, \rho, Proposal \rangle$ to all parties\;
    }
    \textbf{wait until}  $|RecommendationList| \geq 1$ \; \;

    \textbf{Permutation:}\;
     $PermutationList \leftarrow RandomOrder\langle ID, SelectedParties \rangle$\;

    \textbf{Sequential-ABBA:}\;
    $Proposal \leftarrow Sequential$-$ABBA \langle PermutationList, RecommendationList \rangle$\;
    $decide \langle Proposal \rangle$\;
    $INSTANCE \leftarrow INSTANCE+1$\;
}

\un{receiving $\langle \textit{PROPOSE}, ID_k, \rho, Proposal\rangle$ for the first time}{
         $  RecommendationList \leftarrow SCR\langle ID, ID_k, Proposal, \rho \rangle$\;
}

\un{receiving $\langle \textit{RECOMMEND}, ID_k, Proposal,  \rho \rangle$ for the first time}{
         \uIf{no $RECOMMENDATION$ is sent yet }{
             $RecommendationList \leftarrow SCR\langle ID, ID_k,Proposal, \rho \rangle$\;
         }
    }

\caption{Prioritized-MVBA: Protocol for party $p_i$}
\label{alg:PMVBA}
\end{algorithm}

% \end{comment}

\section{Evaluation and Analysis of the Proposed Protocol}
\label{sec:eval}
This section evaluates the proposed protocol based on its correctness and efficiency. Correctness ensures that the protocol maintains predefined security properties, making it resilient against adversarial attacks. Efficiency evaluates the protocol's performance in terms of resource utilization and execution time. In this Section, we conduct extensive evaluation and analysis to answer:

\noindent
\textbullet~~What are the formal proofs supporting the correctness of the pMVBA protocol? ($\S $ \ref{subsection:security})

\noindent
\textbullet~~How does the pMVBA protocol perform in terms of message complexity, communication complexity, and running time? ($\S $ \ref{subsection:efficiency})

\noindent
\textbullet~~What are the behaviors and outcomes of the protocol under different network conditions and adversarial strategies? ($\S $ \ref{subsection: case_study})

\noindent
\textbullet~~How does the performance of the pMVBA protocol compare to other MVBA protocols in terms of key metrics? ($\S $ \ref{subsection: performance_metric})

\noindent
\textbullet~~How does pMVBA's performance compare to atomic broadcast protocols in communication complexity? ($\S $ \ref{ComparisonABC})

%\xianzhong{add the baseline name and why we select this baseline?
%Although the paper cites sMVBA, it only makes a comparison with sDumbo, which is not a fair comparison. The proposed protocol is an MVBA work and should be compared with other MVBA works. Also, avoid mixing the comparison between MVBA and ACS works.}

\subsection{Proof of the Prioritized-MVBA Protocol}
\label{subsection:security}

The correctness of the proposed protocol is critical to ensure that it adheres to the standard MVBA outcomes. Our goal is to prove that reducing the number of broadcasts does not compromise the protocol's reliability. The output of a party's request depends on the output of an ABBA instance within our protocol's agreement loop, typically resulting in a value of \textit{1}. For the ABBA protocol to output \textit{1}, at least one honest party must input \textit{1}. We substantiate our protocol's integrity by demonstrating that at least $2f+1$ parties receive a verifiable proof of a party's proposal, thus ensuring the necessary input for the ABBA instance.

\begin{lemma}
\label{lemma1}
At least one party's proposal reaches $2f+1$ parties.
\end{lemma}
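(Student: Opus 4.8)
The plan is to trace the flow of verifiable proofs through the pVCBC and recommend (SCR) steps, using a quorum-intersection argument. First I would observe that the committee $PParties$ has size $f+1$ and, by the validity property of the $(1,\kappa,\epsilon)$-Committee Selection protocol, contains at least one honest party, say $p_h$. Since $p_h$ is honest it will execute the pVCBC protocol on its proposal; by the properties of (p)VCBC, $p_h$ eventually obtains a valid threshold signature $\sigma_h$ — this requires $2f+1$ parties (at least $f+1$ of them honest) to have received $p_h$'s proposal and returned sign-shares. Then $p_h$ multicasts $\langle \text{"propose"}, ID, \sigma_h, proposal_h\rangle$ to all $n$ parties. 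Every honest party that receives this propose message for the first time invokes $SCR$, which multicasts a $\langle recommendation, ID, proposal_h, \sigma_h\rangle$ message; note the verifiable proof $\sigma_h$ passes $Verify_{id}$ at every honest recipient because it is a genuine threshold signature.

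The key step is the counting argument at the recommend layer. There are $n=3f+1$ parties, at least $n-f=2f+1$ of which are honest. Each honest party, upon first seeing either a propose or a recommendation carrying $(proposal_h,\sigma_h)$, will itself broadcast a recommendation for it (unless it has already sent a recommendation for a different proposal — this is the subtle case, handled below). In the clean case, all $2f+1$ honest parties broadcast a recommendation for $proposal_h$, and since channels are reliable, every honest party eventually receives $2f+1$ such recommendations; hence at least $2f+1$ parties hold a verifiable proof of $p_h$'s proposal in their set $\Sigma$, which is exactly the claim.

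The main obstacle — and the part I would spend the most care on — is that an honest party might have already sent a recommendation for some other committee member's proposal $proposal_k$ before hearing about $proposal_h$, so not all honest parties recommend the same proposal. To handle this I would argue by contradiction / pigeonhole: suppose no single proposal is recommended by $2f+1$ parties. The $2f+1$ honest parties each send exactly one recommendation, distributed among at most $f+1$ committee proposals. I would need to show that this still forces some proposal to accumulate $n-f$ recommendations once Byzantine parties' recommendations and the reliable re-broadcast dynamics are accounted for — alternatively, and more cleanly, I would strengthen the argument by noting that whichever honest party first completes pVCBC and sends a propose, its proof is valid and will be accepted and re-recommended; combined with the fact that an honest party sending a recommendation for $proposal_k$ means $proposal_k$ itself had a valid proof (so $p_k$ completed pVCBC, meaning $\ge f+1$ honest parties received $p_k$'s proposal). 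I expect the cleanest route is: fix the honest committee member $p_h$; since every honest party waits for $n-f$ recommendations and there are only $f+1$ possible proposals, by pigeonhole some proposal $p^\star$ receives $\ge \lceil (n-f)/(f+1)\rceil$ recommendations at each honest party — but to reach the exact bound $2f+1$ I would instead show directly that $p_h$'s propose message eventually reaches all honest parties, and any honest party that has not yet recommended will recommend $proposal_h$, while the reliable-channel assumption guarantees delivery, so that either $proposal_h$ or some earlier-committed proposal already enjoys $2f+1$ recommendations. I would make this rigorous by showing the set of honest parties that recommend a proposal with a valid proof has size $\ge 2f+1$ in total, and that a valid proof for a proposal means it was VCBC-completed, then invoke quorum intersection ($2f+1$ sign-share providers intersect the $2f+1$ recommenders in $\ge f+1$ parties) to pin the $2f+1$ figure to a single proposal.
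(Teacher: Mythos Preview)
Your high-level framing is sensible, but it diverges from the paper's proof and the final step does not close. The paper does not single out one honest committee member $p_h$ and chase its proposal through the protocol; instead it performs a case split on how many of the selected (and non-selected) parties are responsive, and in the nontrivial cases runs a global counting argument on recommendation messages: every party must collect $n-f=2f+1$ recommendations before moving to Sequential-ABBA, so when $m$ parties are responsive the total number of recommendation receipts is $m(2f+1)$, and one upper-bounds the same quantity under the hypothesis that no proposal reaches more than $2f$ parties to obtain a contradiction.

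The concrete gap in your plan is exactly the part you yourself flag as ``the subtle case''. Pigeonholing the $\ge 2f+1$ honest recommendations across the $f+1$ committee proposals yields only about $\lceil (2f+1)/(f+1)\rceil$ recommenders for the best proposal, nowhere near $2f+1$. The quorum-intersection move you fall back on at the end---intersecting the $2f+1$ sign-share providers for some proposal $k$ with the $\ge 2f+1$ honest recommenders---does not give what you need either: the $\ge f+1$ parties in the intersection signed a share for $k$ but may have \emph{recommended a different proposal}, since an honest party recommends the first propose/recommendation it sees, not the one it sign-shared for. So nothing in that intersection forces $k$ to sit in $2f+1$ parties' sets $W$. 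You actually had the right instinct when you wrote ``argue by contradiction \ldots suppose no single proposal is recommended by $2f+1$ parties''; the paper's proof stays with that contradiction and counts recommendation receipts across the whole system, rather than pivoting to a per-proposal quorum-intersection argument that cannot, by itself, pin the $2f+1$ figure to a single proposal.
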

\begin{proof} Since both the selected and non-selected parties can be non-responsive (Byzantine or system failure), and the adversary can coordinate with the Byzantine parties to schedule the message delivery in a way that slows down the agreement process. To prove the lemma, we parameterized the number of responsive parties among the selected parties and the total number of responsive parties among both selected and non-selected parties. Below is the parameterized statement. 
%assume the below scenario where we parameterized the number of selected and non-selected parties and prove that the proposed protocol guarantees that at least one verifiable proof reaches to at least $2f+1$ parties.
\begin{itemize}
    
   % \item Among $ f+1$ selected parties, $f$  parties are non-responsive.
   % \item Selected $ f+1 $ parties are responsive, but other $f$ non-selected parties are non-responsive.
   % \item Every party is responsive, including the selected $ f+1 $ parties.
    \item Among the $f+1$ selected parties,  $t$ parties are responsive where $1\leq t \leq  f+1$ , and among total $n$ parties, $m$ number of parties are responsive, where $ 2f+1 \leq m \leq n $.
    We provide three scenarios below, which are special cases of the above statement. We utilize the special cases to prove the parameterized statement. The special cases are three edge cases where the maximum number of selected and the non-selected parties can be responsive and non-responsive (case 1 and 2)
    \begin{enumerate}
         \item Among $ f+1$ selected parties, $f$  parties are non-responsive. (The maximum number of selected parties can be non-responsive)
         \item Selected $ f+1 $ parties are responsive, but other $f$ non-selected parties are non-responsive. (The maximum number of non-selected parties can be non-responsive)
         \item Every party is responsive, including the selected $f+1$ parties. (The maximum number of responsive parties)
    \end{enumerate}
    We will first prove that the above three scenarios are a special case of the parameterized statement. These three cases are the lower and upper bound for $t$ and $m$.

    \begin{enumerate}
         \item For case 1, $t=1$ (selected responsive parties) and $m = t+2f = 2f+1$ (since $f$ selected parties are non-responsive, the total number of responsive parties is $2f+1$ that includes one responsive selected party). So it adheres requirement of the parameterized equation.
         \item For case 2, $t=f+1$ and $m = t+f = f+1+f (f+1 selected responsive party and f non-seleted responsive party)= 2f+1$. So it also adheres requirement of the parameterized equation.
         \item For case 3, $t=f+1$ and $m=t+2f=3f+1=n$. Therefore, it also adheres requirement of the parameterized equation.
    \end{enumerate}
\end{itemize}

We prove that in every scenario, at least one party's proposal reaches $\langle 2f+1 \rangle$ parties. Since we have proved that cases (1), (2), and (3) are the special cases of a parameterized statement, we utilize these cases to prove that even in the scenario of parameterized statement, at least one proposal reaches $2f+1$ parties.

\begin{enumerate}

    \item The proof proceeds by contradiction. Assume that among the $f+1$ selected parties, $f$ number of parties are nonresponsive, meaning only one party completes the $pVCBC$ protocol and broadcasts the verifiable proof. Under the assumption that no proposal reaches more than $2f$ parties, the total number of recommendations would be:

    \hspace{12.0em} $(1)(2f)(\frac{2f+1}{1})$ %\\*

    On average, a selected party’s proposal is recommended by $\frac{2f+1}{1}$ parties. Simplifying this:

    \begin{equation}
       \begin{split}
         (1)(2f)(\frac{2f+1}{1} ) \\ 
         = (2f)(2f + 1) \\
       \end{split}
    \end{equation}

    Thus, the total number of recommendations is $(2f)(2f + 1)$, which is less than the required $(2f + 1)(2f + 1)$ recommendations.

    This leads to a contradiction, as the protocol necessitates a sufficient number of recommendation messages to guarantee progress. Given that honest parties are obligated to provide the necessary number of recommendations, it follows that the proposal must have been received by a minimum of $2f + 1$ parties. Consequently, the assumption that no proposal reaches more than $2f$ parties is incorrect, and at least one selected party's proposal reaches $2f + 1$ parties, thereby enabling the protocol to function as intended.
    
    %Since among $ f+1 $ selected parties, $f$ parties are non-responsive, only one party completes the $pVCBC$ protocol and broadcasts the verifiable proof. If any party receives a verifiable proof, then the verifiable proof is from the responsive selected party. Since every party receives the verifiable proof for the same party's proposal, the proposal reaches $\langle 2f+1 \rangle$ parties.
    
    \item We will prove by contradiction. Assume that no proposal reaches more than $2f$ parties.

    Given the assumption that a minimum of $2f+1$ parties are responsive and that $f+1$ of these are chosen to propose, it follows that every party must receive a proposal during the $propose$ step. Furthermore, no party commences the Sequential-ABBA protocol until it has received a minimum of $2f+1$ recommendations and has secured the random order necessary to advance with Sequential-ABBA. This indicates that a total of $(2f+1)(2f+1)$ recommendation messages must be exchanged.

    Assuming that no proposal is recommended to more than $2f$ parties and that $f+1$ parties submit their requests, the total number of recommendations would be:

\hspace{12.0em} $(f+1)(2f)(\frac{2f+1}{f+1})$ %\\*

 On average, a proposal from a selected party would receive recommendations from $\frac{2f+1}{f+1}$ parties. Simplifying the expression:

\begin{equation}
\begin{split}
 (f+1)(2f)\frac{2f+1}{f+1}  \\ 
 =(f+1)(2f)(2 - \frac{1}{f+1})\\
 = 4f(f+1) - \frac{2f(f+1)}{f+1} \\
 =4f^2 + 4f -2f \\
 =2f(2f + 1) \\
\end{split}
\end{equation}

Therefore, the total number of recommendations is $(2f)(2f+1)$, which is lesser to the necessary $(2f+1)(2f+1)$ recommendation messages.

This results in a contradiction, as the protocol necessitates the transmission of sufficient recommendation messages for progress to occur. Honest parties are required to ensure the delivery of the necessary number of recommendations; thus, at least one proposal must be received by $2f+1$ parties. Therefore, our assumption that no proposal reaches more than $2f$ parties is incorrect. Consequently, at least one party's proposal successfully reaches $2f+1$ parties, thereby ensuring the protocol can advance as necessary.
    
    %The $ f+1 $ selected parties are responsive and complete the $pVCBC$ protocol. Each selected party broadcasts the verifiable proof, and $ 2f+1 $ parties receive the verifiable proof ( another $f$ number of parties are non-responsive). Any party receives a verifiable proof, recommends the received verifiable proof, and waits for $ 2f+1 $ recommendations. If a party receives $ 2f + 1 $ recommendations, then these recommendations include all $ f+1 $ parties' verifiable proofs because, among the $ 2f + 1 $ received recommendations, $ f+1 $ number of recommendations are from the selected parties. So, every proposal reaches $ 2f + 1 $ parties.  

    \item The proof proceeds by contradiction. Assume that no proposal reaches more than $2f$ parties.

    Since we assume that all parties are responsive, each party receives a proposal during the $propose$ step. Additionally, no party engages in the Sequential-ABBA protocol until it has received a minimum of $2f+1$ recommendations and acquired the random order required to execute the Sequential-ABBA protocol. Thus, the protocol requires a total of $(3f+1)(2f+1)$ recommendation messages.

If no proposal is recommended to more than $2f$ parties and $f+1$ parties propose their requests, then the total number of recommendations must not exceed $(f+1)(2f)$. On average, a proposal from a selected party can receive recommendations from $\frac{3f+1}{f+1}$ parties. Therefore, the total number of recommendations does not exceed:

    \begin{equation}
        \begin{split}
           (f+1)(2f)(\frac{3f+1}{f+1} )
        \end{split}
    \end{equation}
 
   Simplifying:

   \begin{equation}
     \begin{split}
        (f+1)(2f)\frac{3f+3-2}{f+1}\\
        = (f+1)(2f)(3- \frac{2}{f+1}\\
        = 6f(f+1)-4f\\
        = 6f^2+2f\\
        = 2f(3f+1)\\
     \end{split}
   \end{equation}

   Therefore, the total number of recommendations can be expressed as $(2f)(3f+1)$. The necessary quantity of recommendations is $(3f+1)(2f+1)$.

   While a proposal can be recommended by more than one party, if it is assumed that all recommending parties recommend the same set of $2f$ parties, the condition of receiving $2f+1$ recommendations remains unfulfilled. Consequently, for the protocol to advance, it is essential that the honest parties broadcast a sufficient number of recommendation messages, which the adversary will ultimately deliver.

   Therefore, a minimum of one proposal must be communicated to $2f+1$ parties. This contradicts the initial assumption; therefore, at least one party's proposal has to reach $2f+1$ parties, thereby ensuring the protocol's advancement.

\end{enumerate}
We prove that in every scenario, at least one party's proposal has been shared to $\langle 2f+1 \rangle$ parties. Having shown that cases (1), (2), and (3) are particular instances of the general parameterized scenario, the proof of these specific cases allow us to extend the result to the parameterized scenario.

In the three scenarios analyzed, it is evident that with $m$ responsive parties, the total number of recommendations are $m  (2f+1)$. If $t$ selected parties are responsive, the total number of recommendations can be represented as $t ( 2f) ( \frac{m}{t})$. In this context, $(t) (2f)$ indicates that the proposals from each of the $t$ selected parties can reach a maximum of $2f$ parties. Meanwhile, $\frac{m}{t}$ denotes the average number of parties that each selected party's proposal can reach.

Therefore, we establish the subsequent relationship:

\begin{equation}
    \begin{split}
        (t)(2f)\frac{m}{t} < (m)(2f+1)\\  
    \end{split}
\end{equation}

Simplifying further:
\begin{equation}
    \begin{split}
        (m)(2f) < (m)(2f+1)\\
    \end{split}
\end{equation}

This inequality confirms that the number of recommendations from the selected parties is less than the required total number of recommendations, proving that at least one party’s proposal must reach $2f+1$ parties, ensuring protocol progress.

\end{proof}

This lemma ensures that if $2f+1$ number of parties receive verifiable proof, at least one honest party will input \textit{1} to the ABBA instance, ensuring the protocol reaches an agreement on \textit{1}.

\begin{lemma}
\label{lemma2}
Without any permutation, the adversary can cause at most $f+1$ iterations of the agreement loop in the Sequential-ABBA protocol.
\end{lemma}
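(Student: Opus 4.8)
The plan is to bound the number of agreement-loop iterations directly by the committee size and then show that this bound is attained. Recall that the committee selection protocol returns a set $C$ with $|C|=\kappa=f+1$, and that Sequential-ABBA processes the committee members one by one, invoking exactly one ABBA instance per member and halting the moment some instance decides $1$. When no permutation is applied, the iteration order over $C$ is fixed in advance, so the loop is a single pass through the $f+1$ members; hence it performs at most $f+1$ iterations. That the loop actually halts within these $f+1$ iterations (rather than exhausting $C$ with every ABBA instance deciding $0$) is precisely what Lemma~\ref{lemma1} gives us: the member $p_{j^\ast}$ whose proposal reaches $2f+1$ parties is seen by at least $f+1$ honest parties, all of which input $1$ to ABBA instance $j^\ast$, so by the Biased External Validity and Termination properties of ABBA every honest party decides $1$ at or before iteration $j^\ast \le f+1$.

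For the tightness direction I would exhibit one execution realizing $f+1$ iterations. By the Validity clause of committee selection every party is equally likely to lie in $C$, so there is an execution in which all $f$ corrupted parties belong to $C$ and a single honest member $p_h$ remains; fix the (permutation-free) order so that $p_h$ is processed last. Let the $f$ corrupted members stay silent, never sending a \emph{propose} message, so no honest party ever obtains a verifiable proof for their proposals and none of those proposals enters any honest party's set $W$; meanwhile $p_h$ completes pVCBC and multicasts its \emph{propose}, and, exactly as in case~1 of the proof of Lemma~\ref{lemma1}, its proposal is the only one recommended and therefore lies in $W$ at every honest party. Tracing the loop: in each of the first $f$ iterations every honest party inputs $0$, so by the Validity and Termination properties of ABBA that instance outputs $0$ and the loop does not stop; in iteration $f+1$ every honest party inputs $1$, so by Biased External Validity and Termination the instance outputs $1$ and the loop halts, after exactly $f+1$ iterations.

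The only point that needs care is this tightness direction, where one must keep the $f$ non-participating members' proposals out of every honest $W$ while still satisfying the hypothesis of Lemma~\ref{lemma1} so that the loop terminates at all; keeping the corrupted members completely silent settles the first requirement, and channelling $p_h$'s proposal through the \emph{recommend} step as in case~1 of Lemma~\ref{lemma1} settles the second, after which the conclusion is just bookkeeping with the ABBA properties. It is also worth confirming that ``the agreement loop'' here is genuinely one pass over the $f+1$ committee members with no outer re-selection step, which again follows from Lemma~\ref{lemma1}, since some ABBA instance is bound to output $1$.
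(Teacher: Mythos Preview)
Your argument is correct, but it takes a noticeably more elementary route than the paper's. The paper adapts the vote-counting technique of Cachin et al.: it lets $A$ be the total number of honest $\mathit{vote}=0$ messages over all iterations, bounds $A\le f(n-f)$ (each honest party has at least one proposal in $W$, so it votes $0$ on at most $f$ of the $f{+}1$ candidates) and $A\ge R(f{+}1)$ (each $0$-deciding round needs at least $f{+}1$ honest zero-votes), and derives $R\le f(2f{+}1)/(f{+}1)$. Only in its closing sentence does the paper fall back on exactly what you use from the start---that there are just $f{+}1$ committee members and Lemma~\ref{lemma1} forces one of the corresponding ABBA instances to decide~$1$, whence $R<f{+}1$. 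In Cachin's original setting (all $n$ parties propose) the counting is indispensable; once the committee is already of size $f{+}1$, your direct appeal to the committee size plus Lemma~\ref{lemma1} and Biased External Validity is the cleaner argument and subsumes the paper's final step.

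You also supply a tightness construction that the paper omits. Two small remarks on it: first, rather than appealing to the uniform-probability clause of committee selection, you can more simply use the adaptive adversary---wait until $C$ is revealed, then corrupt the first $f$ members in the fixed iteration order, leaving the last member honest. Second, for the first $f$ rounds to decide $0$ you implicitly need that no vote with $u_l=1$ reaches any honest party; this holds because the $f$ silent members never complete pVCBC and hence possess no valid threshold signature to attach, while the adversary (seeking to delay) will of course not fabricate a $u_l=1$ vote on their behalf. With those points made explicit, the construction goes through exactly as you describe.
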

\begin{proof}
Since only $f+1$ parties are selected, the number of iterations is bounded above by $f+1$. By Lemma~\ref{lemma1}, at least one verifiable proof is delivered to $f+1$ honest parties, which guarantees that the protocol reaches agreement on at least one party’s proposal. We therefore analyze the worst case in which the protocol requires exactly 
$f+1$ iterations. If the ABBA execution order of the selected parties is known in advance, an adaptive adversary can schedule message deliveries so that only the proposal of the last party in that order attains recommendations from $2f+1$ number of parties. Consequently, for all earlier selected parties, less than the threshold number of parties cast $vote=1$ and the instance decides $0$; for the final selected party, at least $f+1$ honest parties cast $vote = 1$ and the instance decides $1$.
.
\end{proof}
\begin{lemma}
\label{lemma3}
Let $\overline{A} \subseteq \{1, 2, ..., f+1\}$ be the set of selected parties for which at least $f+1$ honest parties receive the verifiable proof, and let $\Pi$ be a random permutation of the $f+1$ selected parties. Then, except with negligible probability:
\begin{itemize}
    \item For every party $p \in \overline{A}$, the ABBA protocol on $ID|p$ will decide $1$.
    \item $|\overline{A}| \geq 1$.
    \item There exists a constant $\beta > 1$ such that for all $t \geq 1$, $Pr[\Pi[1] \notin \overline{A} \wedge \Pi[2] \notin \overline{A} \wedge ... \wedge \Pi[t] \notin \overline{A}] \leq \beta^{-t}$.
\end{itemize}
\end{lemma}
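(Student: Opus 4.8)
The plan is to establish the three bullet points in turn, taking the middle one first since it is used by the third. Non-emptiness of $\overline{A}$ follows immediately from Lemma~\ref{lemma1}: that lemma gives at least one selected party whose proposal reaches $2f+1$ parties, of which at most $f$ are corrupted, so at least $f+1$ are honest and each of them holds a verifiable proof of that proposal; by the definition of $\overline{A}$ this party lies in $\overline{A}$, hence $\overline{A}\neq\emptyset$.

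For the decision value, fix $p\in\overline{A}$. By definition at least $f+1$ honest parties hold a verifiable proof for $p$'s proposal, so when the instance $ID|p$ of Sequential-ABBA is invoked, each of these parties inputs $1$. I would then argue that every remaining honest party eventually obtains an input to $ID|p$ as well --- either $1$, after retrieving the proof via the recommend/helping step, or $0$ once it has accumulated $2f+1$ messages $vote=0$ for $p$ --- so that the hypothesis of the ABBA Termination property is met and every honest party decides. Since at least $\langle f+1\rangle$ honest parties proposed $1$, the Biased External Validity property of ABBA forces the common decision to be $1$, so ABBA on $ID|p$ decides $1$.

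For the tail bound, let $n'=f+1$ be the number of selected parties and $a=|\overline{A}|\ge 1$. Since $\Pi$ is a uniformly random permutation of the $n'$ selected parties, for every $t$ with $1\le t\le n'-a$,
\[
\Pr\bigl[\Pi[1]\notin\overline{A}\wedge\cdots\wedge\Pi[t]\notin\overline{A}\bigr]=\prod_{i=0}^{t-1}\frac{n'-a-i}{n'-i}\le\Bigl(1-\frac{a}{n'}\Bigr)^{t}\le\Bigl(1-\frac{1}{n'}\Bigr)^{t},
\]
where each factor is bounded using $\frac{n'-a-i}{n'-i}\le\frac{n'-a}{n'}$, valid whenever $a,i\ge 0$; for $t>n'-a$ the described event is impossible and its probability is $0$. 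Taking $\beta=\frac{n'}{n'-1}=\frac{f+1}{f}>1$ then yields the bound $\beta^{-t}$ for all $t\ge 1$.

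The step I expect to be the main obstacle is the termination argument inside the second bullet: making precise why an honest party that never receives $p$'s verifiable proof during the recommend phase is nonetheless guaranteed to feed some input into $ID|p$, so that the hypothesis ``all honest parties receive input'' of ABBA Termination actually holds. This needs a careful tracing of how the recommend step and the $vote$ mechanism of Sequential-ABBA supply inputs to each binary agreement instance; by contrast, the non-emptiness claim and the permutation bound are routine once Lemma~\ref{lemma1} is in hand.
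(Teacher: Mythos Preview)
Your argument is correct and structurally close to the paper's, with two differences worth flagging.

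For $|\overline{A}|\ge 1$ you appeal to Lemma~\ref{lemma1} directly: a proposal reaching $2f+1$ parties reaches at least $f+1$ honest ones, so its owner lies in $\overline{A}$. The paper instead reuses the $0$-vote counting technique of Lemma~\ref{lemma2}, bounding the total number of honest $0$-votes from above by $f(n-f)$ and from below by $(f+1-|\overline{A}|)(f+1)$; but that route also ends up invoking Lemma~\ref{lemma1} (via ``$R$ is at most $f$''), so your shortcut is cleaner and loses nothing.

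For the tail bound you compute $\beta=(f+1)/f$ from $|\overline{A}|\ge 1$. The paper instead asserts that ``$|\overline{A}|$ is at least a constant fraction of $f+1$'' and takes $\beta$ to be an absolute constant. That stronger assertion is what is needed for the $O(1)$ expected-iteration claim made later in the paper, but it is not actually justified by the paper's own second-claim bound (which, like yours, only yields $|\overline{A}|\ge 1$). So your $\beta$ is the honest one; just be aware that with $\beta=(f+1)/f$ the expected number of ABBA iterations you can conclude is $O(f)$, not $O(1)$, and this is a gap inherited from the paper rather than introduced by you.

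For the first bullet the paper argues exactly as you do, via biased external validity; it does not explicitly address the termination hypothesis you single out. Your instinct that this is the delicate step is right: the clean way to discharge it is to note from Algorithm~\ref{algo:Sequential-ABBA} that an honest party sets $u_l\leftarrow 1$ as soon as it receives \emph{any} vote carrying $u_l=1$ (lines handling incoming votes), so once the $f+1$ honest holders of $p$'s proof multicast their $1$-votes, every honest party that collects $n-f$ votes will have seen at least one of them and will therefore also input $1$ to the ABBA instance.
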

\begin{proof}
The binary agreement protocol biased towards $1$ decides $0$ if the honest parties input $0$. From the Sequential-ABBA protocol, each party must receive $\langle n-f \rangle$ $vote=0$ messages in order to input $0$ to an ABBA instance and to decide $0$. However, from Lemma \ref{lemma1} at least $\langle f+1 \rangle$ honest parties receive at least one verifiable proof from a selected party $p$, then it is not the case for the party $p\in \overline{A}$. This proves the first claim. 

To prove the second claim, we refer to the two special cases of Lemma \ref{lemma1}. In case $1$, one selected party is responsive and every party receives the verifiable proof of that selected party; therefore, $|\overline{A}| = 1$. For case 2, every party receives a recommendation for each selected party. Therefore, each selected party's verifiable proof reaches $2f+1$ parties, thus $|\overline{A}| = f+1$. Since Lemma \ref{lemma1} proves that at least one party's proposal reaches $2f+1$ parties, it generalizes the claim $|\overline{A}| \geq 1$.%, since $R$ is at most $ f$. % \ footnote {}.

The third claim follows now because $|\overline{A}|$ is at least a constant fraction of $f+1$ and thus, there is a constant $\beta > 1 $ such that $Pr[\Pi(i)) \notin \overline{A}] \leq \frac{1}{\beta}$ for all $1 \leq i \leq t$. Since the probability of the $t$ first elements of $\Pi$ jointly satisfying the condition is no larger than for $t$ independently and uniformly chosen values, we obtain $Pr[\Pi[1] \notin \overline{A} \wedge \Pi[2] \notin \overline{A} \wedge .... \wedge \Pi[t] \notin \overline{A}] \leq \beta^{-t}$ \footnotemark.

\footnotetext{To prove the part of the Lemma \ref{lemma3}, we have used the same technique of Lemma 9 from \cite{SECURE03}. We also refer interested readers to Lemma 10 of the same paper for the pseudorandom generation.}
\end{proof}

\begin{theorem}
Given a protocol for biased binary Byzantine agreement and a protocol for verifiable consistent broadcast, the Prioritized-MVBA protocol provides multi-valued validated Byzantine agreement for $n > 3f$ and invokes a constant expected number of binary Byzantine agreement protocols.
\end{theorem}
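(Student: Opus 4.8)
The plan is to verify each of the five MVBA properties in turn, composing Lemmas \ref{lemma1}--\ref{lemma3} with the security of \textsf{pVCBC}, the guarantees of the biased \textsf{ABBA} primitive, and the Agreement/Validity of committee selection and coin-tossing. I would begin with \emph{External Validity} and \emph{Integrity}, which are the easiest. A party only outputs the value returned by \emph{Sequential-ABBA}, namely the proposal of the first party $\Pi[t]$ in the agreed permutation whose \textsf{ABBA} instance decides $1$; by \textsf{ABBA} Validity the decision $1$ forces at least one honest party to have input $1$, and an honest party inputs $1$ for $\Pi[t]$ only after receiving a verifiable proof of $\Pi[t]$'s proposal. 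Since in \textsf{pVCBC} an honest party contributes a signature share only when the predicate $Q$ holds on the proposal (the predicate check inherited from \textsf{VCBC}), a valid threshold signature certifies $Q(\mathit{proposal})=\mathrm{true}$, which yields External Validity. Integrity is then immediate: the decided value is a committee member's proposal, and committee members are parties, so the value was input by a party.

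Next I would establish \emph{Agreement}. By the Agreement property of committee selection and of the coin-tossing used in \emph{Permutation}, all honest parties hold the same committee $PParties$ and the same permutation $\Pi$, so they run the \textsf{ABBA} instances $ID|\Pi[1], ID|\Pi[2],\dots$ in the same order; by \textsf{ABBA} Agreement each instance produces the same bit at every honest party, hence the smallest index $t$ with $\mathsf{ABBA}(ID|\Pi[t])=1$ is common to all honest parties and they all output the proposal of the same party $\Pi[t]$. The point that needs care is that every honest party can actually \emph{recover} $\Pi[t]$'s proposal, not merely agree that index $t$ won: from $\mathsf{ABBA}(ID|\Pi[t])=1$ some honest party voted $1$, so $\Pi[t]\in\overline{A}$, and by Lemma \ref{lemma1} together with Lemma \ref{lemma3} at least $f+1$ honest parties received the verifiable proof and broadcast it as a recommendation; since the adversary eventually delivers messages, every honest party eventually receives the $(\mathit{proposal},\sigma)$ pair for $\Pi[t]$ and outputs it.

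For \emph{Liveness} and the constant expected \textsf{ABBA} count, Lemma \ref{lemma3} gives $|\overline{A}|\geq 1$, so some committee party's \textsf{ABBA} instance decides $1$; each \textsf{ABBA} instance terminates by its Termination property, and $\Pi$ has only $f+1$ entries, so \emph{Sequential-ABBA}, and hence the whole protocol, terminates at every honest party. For the number of binary agreement invocations, Lemma \ref{lemma2} bounds it by $f+1$ in the worst case, while the third claim of Lemma \ref{lemma3} gives $\Pr[\Pi[1]\notin\overline{A}\wedge\dots\wedge\Pi[t]\notin\overline{A}]\leq\beta^{-t}$ for a constant $\beta>1$; summing, the expected number of instances run before one lands in $\overline{A}$ is at most $\sum_{t\geq 0}\beta^{-t}=\frac{\beta}{\beta-1}=O(1)$. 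Combined with the $O(n^2)$ per-round message cost, this also gives \emph{Efficiency} (a uniformly, with high probability constant-factor, bounded message count).

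The main obstacle is the subtlety inside the Agreement argument: reconciling that honest parties who voted $0$ on $\Pi[t]$ must still be able to output $\Pi[t]$'s proposal once \textsf{ABBA} decides $1$ there. This is precisely where the ``reaches $2f+1$ parties'' guarantee of Lemma \ref{lemma1}, the \emph{recommend} step, and the transferability/succinctness of the \textsf{pVCBC} threshold-signature proof must be combined carefully to show the proposal of the winning party propagates to every honest party; the remaining properties are a fairly direct composition of the sub-protocol guarantees.
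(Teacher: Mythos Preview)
Your proposal is correct and follows essentially the same route as the paper: verify each MVBA property (External Validity, Integrity, Agreement, Liveness) by composing Lemmas~\ref{lemma1}--\ref{lemma3} with the guarantees of \textsf{ABBA}, \textsf{pVCBC}, committee selection, and the coin-based permutation, and read off the constant expected number of \textsf{ABBA} invocations from the geometric tail in Lemma~\ref{lemma3}. If anything, your treatment is more careful than the paper's terse proof---you explicitly address why all honest parties share the same permutation and why the winning party's proposal is recoverable by every honest party, points the paper leaves implicit; the one small imprecision is the inference ``some honest party voted $1$, so $\Pi[t]\in\overline{A}$,'' which does not follow directly (ABBA Validity gives one honest $1$-input, not $f+1$), but the recovery argument goes through anyway via the vote messages carrying the proof and the consistency of \textsf{pVCBC}.
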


\renewcommand\qedsymbol{$\blacksquare$}
\begin{proof}
\textbf{Agreement:} If an honest party outputs a value $v$, then every honest party outputs the value $v$. From Lemma \ref{lemma1}, at least one proposal reaches $2f+1$ parties, and the ABBA protocol reaches an agreement on $1$ if $f+1$ honest parties input $1$. The agreement property of the ABBA protocol ensures all honest parties output $1$ for that ABBA instance and receive the same value $v$, which satisfies $threshold-validate\langle v, \rho \rangle = true$.

\textbf{Liveness:} If honest parties participate and deliver messages, all honest parties decide. From Lemma \ref{lemma1}, at least one proposal reaches $2f+1$ parties, ensuring the ABBA protocol decides $1$. Lemma \ref{lemma3} confirms the protocol reaches an agreement after a constant expected number of iterations.

\textbf{External-validity:} If an honest party terminates and decides on a value $v$, then $externally-valid\langle v, \rho \rangle = true$. The validity of the ABBA protocol ensures at least one honest party inputs $1$ to the ABBA instance, meaning the honest party received a valid $threshold-signature$ $\rho$ for $v$.

\textbf{Integrity:} If honest parties decide on a value $v$, then $v$ was proposed by a party. The ABBA protocol returns $1$ if at least one honest party inputs $1$, which requires a valid $threshold-signature$ $\rho$. Honest parties reply with $sign-shares$ only if a value $v$ is proposed, ensuring $v$ was proposed by a party.
\end{proof}

\subsection{Efficiency Analysis}
\label{subsection:efficiency}

The efficiency of a Byzantine Agreement (BA) protocol depends on message complexity, communication complexity, and running time. We analyze the proposed protocol's efficiency by examining its sub-components: the pVCBC sub-protocol, committee-selection and permutation generation, propose-recommend steps, and the Sequential-ABBA sub-protocol.

\paragraph*{Running Time:} Each sub-protocol and step, except for Sequential-ABBA, has a constant running time. The running time of the proposed protocol is dominated by the Sequential-ABBA sub-protocol. The Sequential-ABBA sub-protocol runs the ABBA protocol biased towards $1$ in an expected constant number of times (Algorithm \ref{algo:Sequential-ABBA} takes a permutation list of the selected parties as an argument and iterates over that list). Since we take a permutation of the selected parties (see Algorithm \ref{algo:per} line 19-24), the probability of choosing a selected party is an expected constant number. The ABBA protocol also has an expected constant number of asynchronous rounds. Therefore, the running time of the protocol is an expected constant number.

\paragraph*{Message Complexity:} In all sub-protocols and steps, except for pVCBC and propose steps, each party communicates with all other parties. Every party transmit $O(1)$ information to all other parties (See line 8 of Algorithm \ref{algo:cs}, line 6 of Algorithm \ref{alg:RC}, lines 14 of Algorithm \ref{alg:PMVBA}, lines 3, 14, and 22 of Algorithm \ref{algo:per}, lines 16 and 18 of Algorithm \ref{algo:Sequential-ABBA}. Each of the broadcasts sends $O(1)$ information). Since $n$ parties send $O(1)$ information to the $n$ parties, the message complexity is $O(n^2)$. The expected message complexity of the Sequential-ABBA protocol is also $O(n^2)$.

\paragraph*{Communication Complexity:} The communication complexity of each sub-protocol and step is $O(n^2(l + \lambda))$, where $l$ is the bit length of input values and $\lambda$ is the bit length of the security parameter. To calculate the communication complexity we use the same approach as message complexity. We observe that in no step a party transmit $O(n)$ information. Thus, the communication complexity is same as message complexity only includes the bit length of the input values and the bit length of the security parameters. The expected communication complexity of the Sequential-ABBA protocol is also $O(n^2(l + \lambda))$.

\paragraph*{Worst-Case Analysis:} Table \ref{table:Table_IV} presents the analysis of worst-case rounds for each base protocol. The results show that the worst-case rounds are consistent in terms of \(f\) across all the base protocols.

\subsection{Case Study}\label{subsection: case_study}

The key contribution of this paper is the reduction in the number of requesting parties and the subsequent utilization of fewer broadcasts to eliminate the expensive sub-components of the classic MVBA protocol. The main challenge is to provide sufficient information to the parties to maintain the protocol's progress while removing the costly elements. It is crucial to ensure that at least one selected party's proposal reaches at least $2f+1$ parties. This is achieved through the pVCBC step and the propose-recommend step. To illustrate the protocol's effectiveness, we present a case study in two parts. Section \ref{CaseStudy1} demonstrates how the protocol achieves the desired properties with the minimum number of nodes a system can have. To further clarify that the protocol can maintain these properties with a larger number of nodes, Section \ref{CaseStudy2} provides a case study using charts for a system with three times the number of faulty nodes.

% \subsubsection{Message flow in different stages}
% Since there is a faulty node and the adversary/network can choose to deliver an honest node's message in a way that delay the agreement process. We showed how the presence of a faulty party can change the message delivery pattern. Also, how the adversary can play a role to delay the some node's message delivery that prevent the node's proposal to reach the majority of the parties. But our goal is to ensure that nothing can prevents the nodes to reach an agreement on a party's proposal. The figures analyzes the different scenario and shows that the adversary is unable to cause the harm. Figure \ref{fig:fig1} assumes that there is no faulty node and the adversary delivers the message uniformly which leads the protocol to reach an agreement in the very first try. Figure \ref{fig:fig2} assumes either one node gets threshold-sign ($ts$) early or the adversary gives preference to one node's $ts$ delivery than others what prevents one selected node's proposal to reach the majority of the party. Figure \ref{fig:fig3} assumes either the selected node $p_2$ is faulty or the adversary entirely decides to not deliver any message from this node. Figure \ref{fig:fig4} assumes a non-selected node is faulty or the adversary decides not to deliver any message from the node which actually positively impact on $ts$ delivery of the selected nodes.
% \label{CaseStudy1}

\subsubsection{Message Flow in Different Stages}\label{CaseStudy1}
In an asynchronous network, the presence of a faulty node and the adversary's ability to manipulate message delivery can delay the agreement process. This section shows how a faulty party can affect message delivery patterns and how the adversary can delay specific messages to prevent a node's proposal from reaching the majority. Despite these challenges, the goal is to ensure that nodes can still reach an agreement on a party's proposal.

Figure \ref{fig:CaseStudy1fig1} assumes no faulty nodes, with the adversary delivering messages uniformly, allowing the protocol to reach an agreement on the first try. Figure \ref{fig:CaseStudy1fig2} considers a scenario where one node gets a threshold-signature ($ts$) early, or the adversary prioritizes one node's $ts$ delivery over others, preventing one selected node's proposal from reaching the majority. Figure \ref{fig:CaseStudy1fig3} assumes the selected node $p_2$ is either faulty or completely isolated by the adversary, and Figure \ref{fig:CaseStudy1fig4} assumes a non-selected node is faulty or isolated, positively impacting the $ts$ delivery of the selected nodes.

\begin{figure*}[htbp]
     \centering
     \begin{subfigure}[b]{0.45\textwidth}
         \centering
         \includegraphics[width=\textwidth]{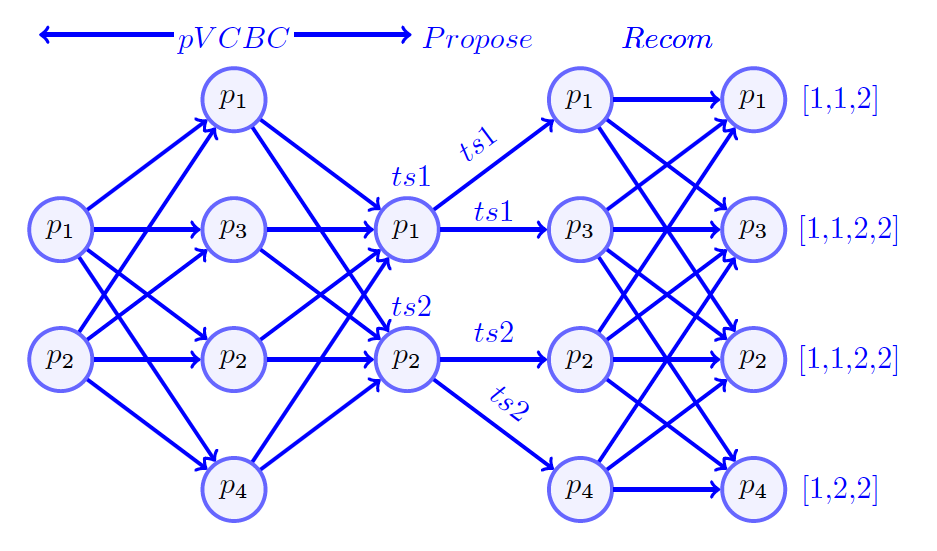}
         \caption{Every selected node is non-Byzantine and the adversary delivers the threshold-signature (ts1, ts2) uniformly (both $p_1$ and $p_2$'s threshold-signature are received by the same number of parties). Both $p_1$'s and $p_2$'s proposals are received by every party.}
         \label{fig:CaseStudy1fig1}
     \end{subfigure}
     \hfill
     \begin{subfigure}[b]{0.45\textwidth}
         \centering
         \includegraphics[width=\textwidth]{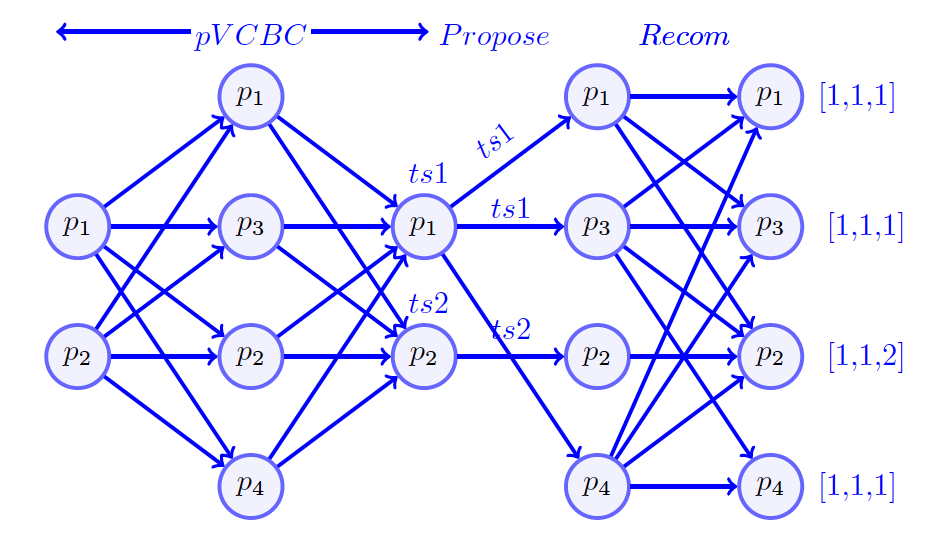}
         \caption{Every selected node is non-Byzantine and the adversary delivers the messages non-uniformly (Node $p_1$'s threshold-signature is received by three parties but node $p_2$'s threshold-signature is received by one party). $p_1$'s proposal is received by every party, but node $p_2$'s proposal is received by only one party.}
         \label{fig:CaseStudy1fig2}
     \end{subfigure}
     \caption{Message flow without the presence of faulty nodes.}
\end{figure*}   

\begin{figure*}[htbp]
     \centering
     \begin{subfigure}[b]{0.45\textwidth}
         \centering
         \includegraphics[width=\textwidth]{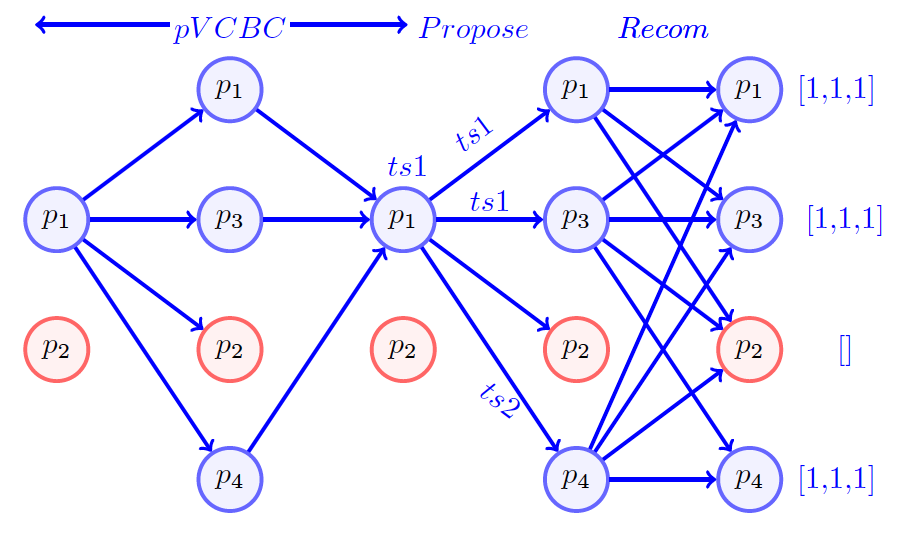}
         \caption{One selected node is non-Byzantine ($p_1$), and the adversary delivers the messages from this node to every other node. Consequently, only node $p_1$ completes the pVCBC protocol and proposes the threshold-signature. Therefore, every party has the proposal except $p_2$ (either Byzantine or non-responding). The red node represents either the node is Byzantine or non-responding or the network is not delivering the messages from that node.}
         \label{fig:CaseStudy1fig3}
     \end{subfigure}
     \hfill
     \begin{subfigure}[b]{0.45\textwidth}
         \centering
         \includegraphics[width=\textwidth]{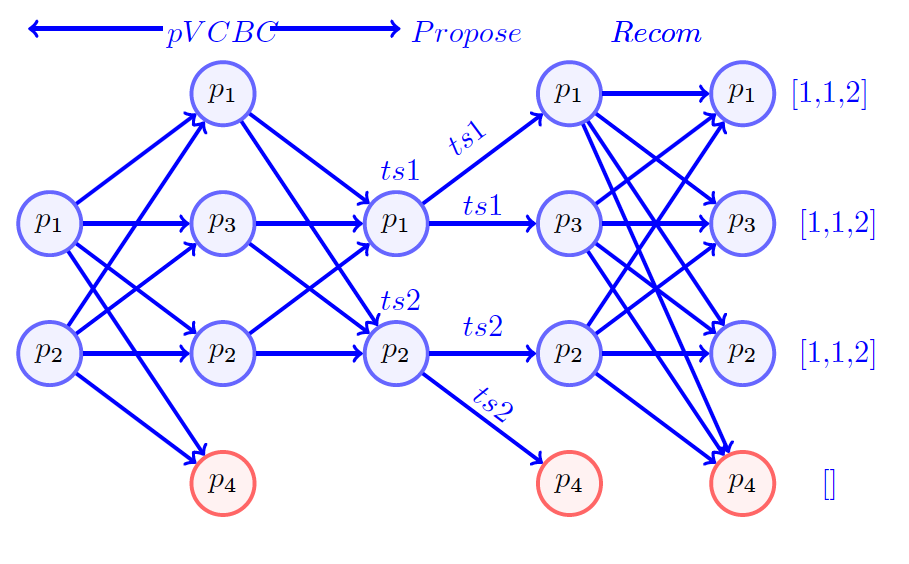}
         \caption{Every selected node is non-Byzantine, and the adversary delivers the messages uniformly (both $p_1$ and $p_2$'s threshold-signature are received by the same number of parties). Both $p_1$'s and $p_2$'s proposals are received by every party. The non-selected node $p_4$ is not responding, which does not affect the overall criteria. The red node represents either the node is Byzantine or non-responding or the network is not delivering the messages from that node.}
         \label{fig:CaseStudy1fig4}
     \end{subfigure}
     \caption{Message flow in the presence of faulty nodes.}
\end{figure*}

In conclusion, the above figures collectively demonstrate that regardless of the message delivery pattern or the presence of faulty nodes, the protocol consistently achieves agreement, thereby proving its robustness and effectiveness.

\subsubsection{Message Distribution for More Than One Faulty Node}\label{CaseStudy2}

For this study, we assume a total of $n=3f+1=10$ parties, with $f=3$ being faulty. This configuration allows us to explore various message distribution patterns. We require at least one party's proposal to reach $2f+1=7$ parties, with a total of $f+1=4$ selected parties. We examine the impact of different numbers of faulty parties and message distribution patterns. In each figure, the top box indicates selected parties and the number of parties that can recommend the proposal to others (e.g., Party $p_1$ can receive 3 recommendations for the first selected party ($R_1$), 3 for the second selected party ($R_2$), and 2 for the third selected party ($R_3$), totaling 7). The bottom box counts the number of recommendations received for each selected party (e.g., $p_1$ receives a total of 7 $R_1$ recommendations).

Figure \ref{fig:CSFig1} assumes no faulty nodes, with the adversary delivering messages uniformly, ensuring that every selected node's proposal reaches the threshold number of parties. Figure \ref{fig:CSFig2} assumes three non-selected non-responding nodes, which also allows every selected node's proposal to reach the threshold. Figure \ref{fig:CSFig3} assumes one honest selected node or the adversary delivering messages from that node, ensuring only the honest selected node's proposal reaches the threshold. Figure \ref{fig:CSFig4} and Figure \ref{fig:CSFig5} show scenarios with two and three honest selected nodes, respectively, where their proposals reach the threshold. Figure \ref{fig:CSFig6} explores non-uniform message distribution, which may prevent some selected nodes' proposals from reaching the threshold, but only when all selected nodes are active.

In conclusion, the above scenarios illustrate that the protocol can maintain its effectiveness and achieve agreement even with varying numbers of faulty nodes and different message distribution patterns.

\begin{figure*}[htbp]
    \centering
    \hspace{1ex}
    \begin{minipage}[t]{0.3\linewidth}
        \centering
        \includegraphics[width=\linewidth]{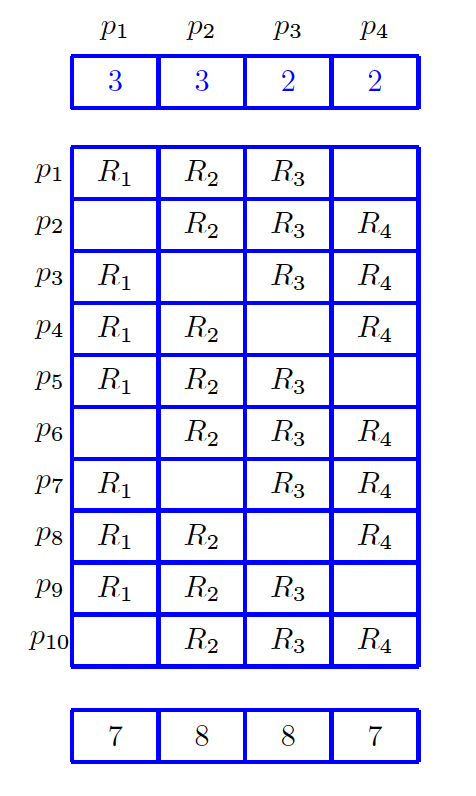}
        \caption{All selected parties complete their pVCBC and broadcast their proposals. All parties are non-Byzantine; therefore, all of them recommend their received proposal. Every party's proposal reaches at least 7 parties.}
        \label{fig:CSFig1}
    \end{minipage}
    \hfill
    \begin{minipage}[t]{0.3\linewidth}
        \centering
        \includegraphics[width=\linewidth]{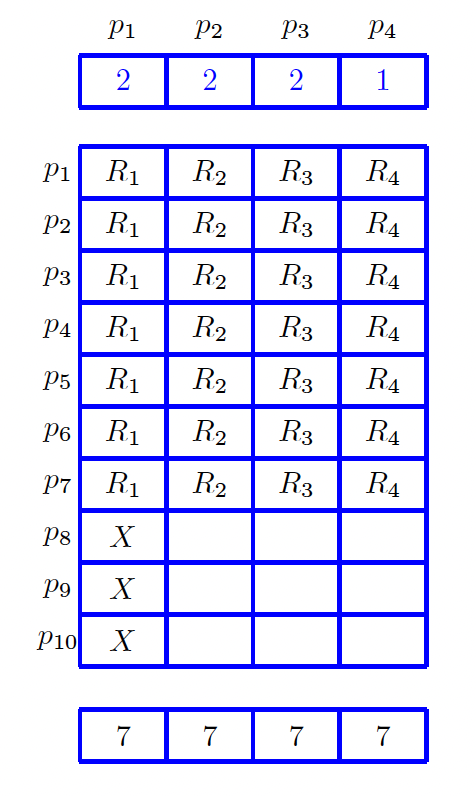}
        \caption{All selected parties complete their pVCBC and broadcast their proposals. Three non-selected faulty parties cannot recommend; consequently, a party receives recommendations for all selected parties. Therefore, all selected parties' proposals reach at least 7 parties.}
        \label{fig:CSFig2}
    \end{minipage}
    \hfill
    \begin{minipage}[t]{0.3\linewidth}
        \centering
        \includegraphics[width=\linewidth]{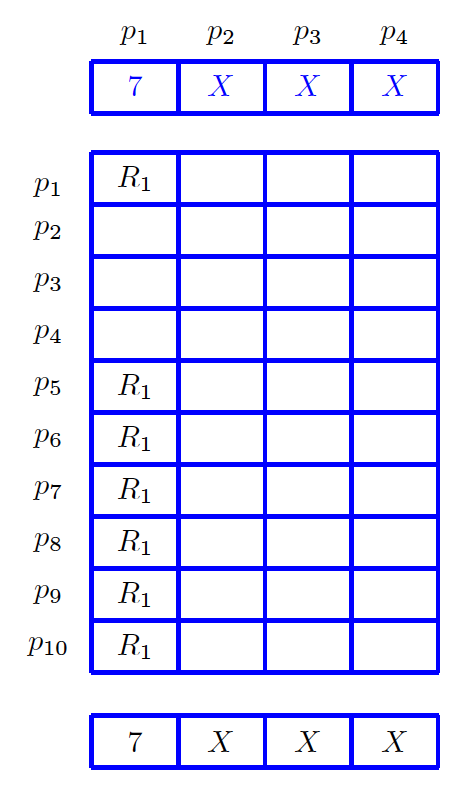}
        \caption{All selected parties are Byzantine or non-responding except $P_1$. The non-Byzantine party completes the pVCBC and proposes its requests. Since there is only one proposal, every party receives the same proposal, which is received by every non-Byzantine party. Therefore, the proposal reaches 7 parties.}
        \label{fig:CSFig3}
    \end{minipage}
\end{figure*}

\begin{figure*}[htbp]
    \centering
    \hspace{1ex}
    \begin{minipage}[t]{0.3\linewidth}
        \centering
        \includegraphics[width=\linewidth]{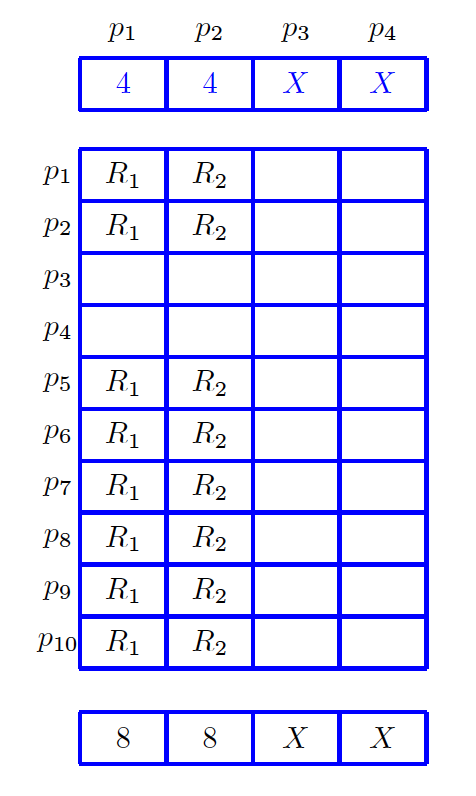}
        \caption{Two selected parties are non-Byzantine and complete the pVCBC protocol, proposing their proposals, which are received by every non-Byzantine party. Each selected party's proposal reaches more than 7 parties.}
        \label{fig:CSFig4}
    \end{minipage}
    \hfill
    \begin{minipage}[t]{0.3\linewidth}
        \centering
        \includegraphics[width=\linewidth]{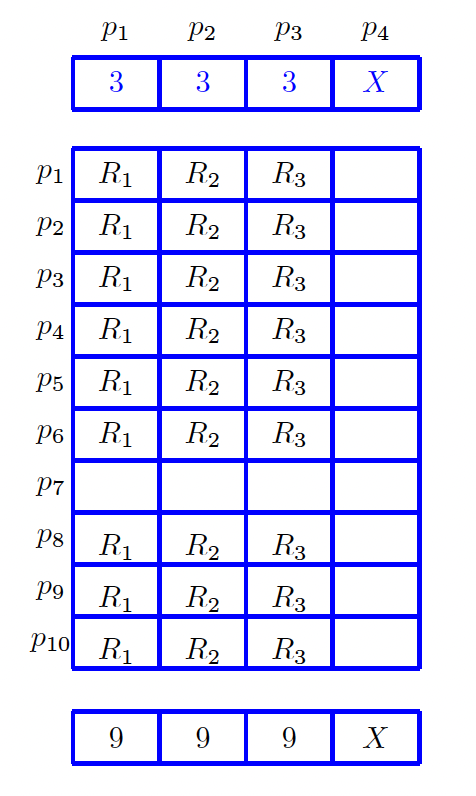}
        \caption{Three selected parties are non-Byzantine and complete the pVCBC protocol, proposing their proposals, which are received by every non-Byzantine party. Each selected party's proposal reaches more than 7 parties.}
        \label{fig:CSFig5}
    \end{minipage}
    \hfill
    \begin{minipage}[t]{0.3\linewidth}
        \centering
        \includegraphics[width=\linewidth]{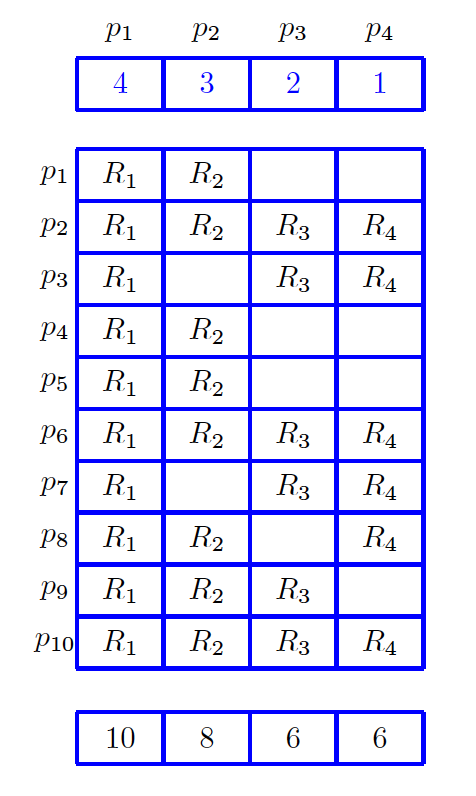}
        \caption{This study tests whether non-uniform message distribution affects recommendation reception. Even with non-uniform distribution, at least half of the selected parties' proposals reach 7 parties.}
        \label{fig:CSFig6}
    \end{minipage}
\end{figure*}
\newpage
\subsection{Analysis of Communication Complexity and Resilience} \label{subsection: performance_metric}
This section evaluates our protocol's performance by comparing its communication complexity with classic MVBA protocol and assessing its resilience, termination, and safety properties against committee-based protocols.

\subsubsection{Comparison of Asynchronous Rounds in MVBA Protocols}
We compare the asynchronous rounds required to reach agreement in our pMVBA protocol with those in classic and traditional MVBA protocols, considering both best-case and worst-case scenarios. To determine the worst-case cost, we first calculate the best-case scenario, identify the concrete rounds, and then estimate the remaining rounds as average ABBA rounds, multiplying by \(f+1\). As shown in Table \ref{table:Table_IV}, Dumbo-MVBA requires the most rounds in the best-case scenario. In the worst-case, while all protocols require rounds proportional to \(f\), VABA struggles to guarantee agreement on a complete output after \(f+1\) instances, requiring 13 rounds per instance—twice as many as other protocols. Among existing MVBA protocols, pMVBA offers superior performance in communication complexity and asynchronous rounds. Although Speeding Dumbo achieves lower asynchronous rounds, it is an ACS protocol and cannot avoid the \(O(\lambda n^3 \log n)\) communication complexity.

\begin{table*}[htbp]
\centering
\caption{Asynchronous rounds of MVBA protocols}
%\begin{center}
\begin{tabular}{||c | c |c ||} 
 \hline
 Protocols &  Rounds (Best case) & Rounds (Worst case) \\ [0.5ex] 
 \hline\hline
 Cachin et al. \cite{SECURE03} & $\boldsymbol{12}$ & $\geq 5+7(f+1)$  \\ 
 [0.5ex] 
 \hline
 VABA \cite{BYZ17} & $13$  & $\geq 13(f+1)$  \\ 
 \hline
 Dumbo-MVBA \cite{BYZ20} & $18$ &  $\geq 11+7(f+1)$  \\ 
 \hline
 \textbf{Our work} & $13$ & $\boldsymbol{6+ 7(f+1)}$ \\
 \hline
\end{tabular}

\label{table:Table_IV}
%\end{center}
\end{table*}

\subsubsection{Comparison of communication complexity with MVBA protocols.}
Our work focuses on the first polynomial-time asynchronous Byzantine agreement protocol and removes its expensive sub-component. We compare our work with the classic and traditional MVBA protocols, as highlighted in Table \ref{table:Table_I}. Our techniques differ from recent improvements in MVBA protocols, so we compare our protocol with the classic protocol and the following improvement. Our protocol differs from the Cachin-MVBA \cite{SECURE03} in communication complexity. VABA \cite{BYZ17} is a view-based protocol where each view is a complete protocol execution. VABA does not guarantee agreement in every instance, whereas our protocol ensures parties reach an agreement on a valid proposal in each instance.

%\nasit{The authors claimed that “The key improvements include removing the need for multiple instances of the protocol or extra rounds of messages and cryptographic computation to reach an agreement on a party’s request while keeping the optimal communication complexity.” I noticed that pMVBA still requires additional rounds, such as committee selection, which entails extra computation costs. Therefore, I suggest the authors carefully compute the concrete costs involved.
%}
We also present a comparison between our proposed protocol and the Dumbo-MVBA \cite{BYZ20} protocols. The Dumbo-MVBA protocol is intended for large inputs. It uses the erasure code technique to remove the expensive term. The erasure code first disperses the message and then recovers the message at the end of the agreement, which is expensive in terms of latency and computation and is not worth it for small-size inputs. Our protocol achieves the desired communication complexity (removal of the $O(n^3)$ term) without the utilization of information dispersal techniques (erasure code) and in one instantiation of the protocol. We can use the information dispersal technique like Dumbo-MVBA in our protocol to reduce the communication complexity to $O(ln + \lambda n^2)$, but our protocol does not depend on the erasure code to remove the $O(n^3)$ term. We use a committee to remove the $O(n^3)$ term it takes only one message and the cryptographic computation. The cryptographic computation is similar to other parts of the protocol. 

\begin{table}[htbp]
%\begin{centering}
\begin{center}

\caption{Comparison for performance metrics of MVBA protocols (E. Code stands for Erasure Code and Inst stands for Instance)}

\begin{tabular}{||c |c |c |c |c | c |c ||} 
 \hline
 Protocols & Comm. (bit) & Word & Time & Msg. & E. Code. & Inst\\ [0.4ex] 
 \hline\hline
 Cachin \cite{SECURE03} & $O(ln^2 + \lambda n^2 + n^3)$ & $O(n^3)$ & $O(1)$ & $O(n^2)$ & No & 1\\ 
 \hline
 VABA \cite{BYZ17} & $O(ln^2 + \lambda n^2)$ & $O(n^2)$ & $O(1)$ & $O(n^2)$ & No & $\geq$ 1\\
\hline
 Dumbo-MVBA \cite{BYZ20} & $\boldsymbol{O(ln + \lambda n^2)}$ & $O(n^2)$ & $O(1)$ & $O(n^2)$ & Yes & 1\\
 \hline
% Dumbo-MVBA* \cite{BYZ20}& $\boldsymbol{O(ln + \lambda n^2 )}$ & $O(n^2)$ & $O(1)$ & $O(n^2)$ & Yes & 1\\
 \hline
  \textbf{Our work} & $O(ln^2 + \lambda n^2)$ & $\boldsymbol{O(n^2)}$ & $\boldsymbol{O(1)}$ & $\boldsymbol{O(n^2)}$ & $\boldsymbol{No}$ & $\boldsymbol{1}$\\ [1ex] 
 \hline
\end{tabular}
\label{table:Table_I}
%\end{centering} 
\end{center}
\end{table}

\subsubsection{Comparison of Resilience, Termination, and Safety with Committee-Based Protocols}
We compare our work with notable committee-based protocols, specifically focusing on resilience, termination, and safety properties. Table \ref{table:Table_II} highlights these comparisons. COINcidence \cite{BYZ19} assumes a trusted setup and violates optimal resilience. It also does not guarantee termination and safety with high probability (whp). Algorand \cite{BYZ21} assumes an untrusted setup, with resilience dependent on network conditions, and does not guarantee termination whp. The Dumbo \cite{FASTERDUMBO} protocol uses a committee-based approach, but its committee-election protocol does not guarantee the selection of an honest party, thus failing to ensure agreement or termination. Our protocol achieves optimal resilience and guarantees both termination and safety, as our committee-election process ensures the selection of at least one honest party. This guarantees that the protocol can make progress and reach an agreement despite adversarial conditions.

\begin{table}[htbp]
    
\caption{Comparison for performance metrics of the committee based protocols}
\begin{center}
\begin{tabular}{||c |c |c |c ||} 
 \hline
 Protocols & n$>$ & Termination & Safety\\ [0.4ex] 
 \hline\hline
 COINcidence \cite{BYZ19} & 4.5f & whp & whp\\ 
 \hline
 Algorand \cite{BYZ21} & * & whp & w.p. 1\\ 
 \hline
 Dumbo1 \cite{FASTERDUMBO} & 3f & whp & w.p. 1\\ 
 \hline
 Dumbo2 \cite{FASTERDUMBO} & 3f & whp & w.p. 1\\ 
 \hline
 \textbf{Our work} & \textbf{3f} & \textbf{w.p. 1} & \textbf{w.p. 1}\\ [1.0ex]
 \hline
%\hline
\end{tabular}
\label{table:Table_II}
\end{center}
\end{table}

In conclusion, we have compared our protocol with both the classical protocol and the committee-based protocol. Though our protocol differs from the atomic broadcast protocol in a number of proposals, we provide a comparison of our protocol with the atomic broadcast protocol in Appendix \ref{ComparisonABC}.

\section{Implementation and Evaluations}
We implemented and evaluated pMVBA protocol using a self-developed gRPC based protocol simulator. Along the way, we also implemented and made systematic comparisons with several typical MVBAs, including VABA \cite{BYZ17} and CKPS01 \cite{SECURE03}.
\subsection{Performance of pMVBA Protocol }
We extensively evaluated and compared our proposed pMVBA protocol with VABA and Classic MVBA (CKSP01) protocol.
\paragraph{Implementation details.}All protocols are written as a single-process \textit{Python 3} program.  The \textit{p2p} channels among nodes are established using
gRPC \textit{insecure} channel. The \textit{Python multiprocessing} library handles the concurrent tasks.  Different from \cite{HoneyBadgerBFT} that optimistically skips
the verification for threshold signatures, our tests verify all signatures (simulate time to calculate threshold-signatures), which better reflects the actual performance in the worst cases, e.g., with corruptions. 

\paragraph{Test environment.}The experiments are conducted using gRPC instances with a similar
configuration. All tests scale up to $10$ instances. A transaction in our tests is a string of 1 KB, which approximates the size of 4 basic Bitcoin transactions with one input and two outputs. The batch size (also called the system load sometimes) represents the number of transactions proposed by all nodes in a one-shot agreement, and will vary from 1 to $5\times10^2$.

\paragraph{Result highlights.}The key information from our experiments is that the pMVBA significantly reduces the latency of Classic MVBA (CKSP01) with the increase of system sizes and attains a throughput $72\%$ higher than CKSP01, almost regardless of the system size, cf. Table \ref{table:Table_HIGHLIGHT};

\begin{table}[h!]
   
\caption{  Improvements of latency and throughput}
\begin{center}
\begin{tabular}{|c | c | c| c |c | c| c |} 
 \hline
 & \multicolumn{3}{ |c| }{Peak throughput (tx/s)} & \multicolumn{3}{ |c |}{Basic Latency (sec)}\\
 \hline
 Scale& CKSP01 & \multicolumn{2}{|c|}{pMVBA} & CKSP01 & \multicolumn{2}{|c|}{pMVBA}\\
\hline
n=4   & 15.1  & 9 & $\uparrow$ 68$\%$  & 6.72 & 5.71 & $\downarrow$ 18$\%$ \\
\hline
n=7 &  23.8 & 12 & $\uparrow$ 98$\%$ &  17 & 11 & $\downarrow$ 55$\%$\\
\hline
n=10 &  37.87  & 22 & $\uparrow$ 72$\%$ & 33 & 23 & $\downarrow$ 43$\%$\\
\hline

\end{tabular}
\label{table:Table_HIGHLIGHT}
\end{center}
\end{table}

\begin{figure}[h!]
     \centering
     \begin{subfigure}[b]{0.75\textwidth}
         \centering
         \includegraphics[width=\textwidth]{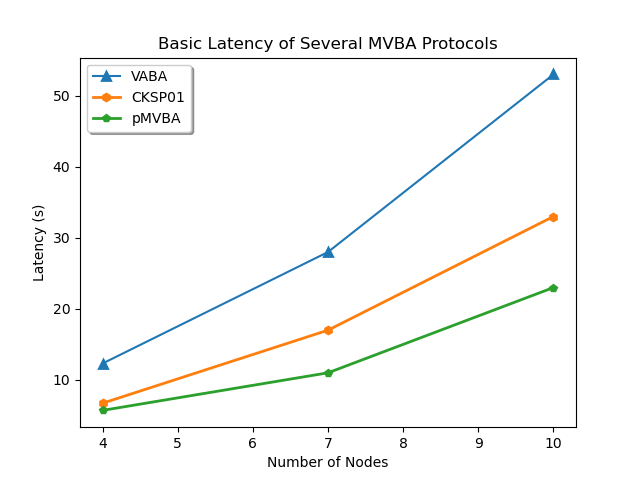}
        
     \end{subfigure}
     
      \caption{Latency (s) vs Number of Nodes}
      \label{fig:basiclatency}

\end{figure}   
\paragraph{\textbf{Basic latency.}}
 We set the input batch size as $4$ transactions
(i.e., nearly zero) to test the basic latency of pMVBA (and VABA and CKSP01) and show the results in Fig. \ref{fig:basiclatency}. It is easy to see that the improvement is consistent with the number of nodes. Though the basic
latency of VABA, CKSP01, and pMVBA is increasing with larger system scales; the improvement made by pMVBA is always significant.

\paragraph{\textbf{Throughput at varying scales.}}
We then evaluate the throughput of pMVBA at different system scales with varying batch sizes. Along the way, we also compare pMVBA to VABA and CKSP01 and visualize the comparison results in Fig. \ref{fig:throughput}.

As illustrated in the first three subgraphs in Fig. \ref{fig:throughput}, pMVBA attains multi-fold improvements relative to CKSP01, disregarding the system scale and batch size. First, pMVBA
directly inherits the advantages of using the 1-step VCBC protocol, after that, using fewer VCBC instances leads to fewer cryptographic computations and message delay time, which helps it finally achieve a throughput twice that of CKSP01 in the larger batch sizes and larger number of nodes. Moreover, carefully analyzing the Fig. \ref{fig:throughput}, we notice that pMVBA noticeably outperforms both VABA and CKSP01 with larger batch sizes and with the increasing number of nodes. The performance improvement is due to the smaller number of computations and messages needed for the pMVBA protocol.

The peak throughputs of all MVBA protocols are comprehensively shown in the last subgraph of Fig.  \ref{fig:throughput}. Although the throughput of all protocols decreases with scaling up, pMVBA always maintains the highest peak throughput among them.
\begin{figure*}[h!]
     \centering
     \begin{subfigure}[b]{0.48\textwidth}
         \centering
         \includegraphics[width=\textwidth]{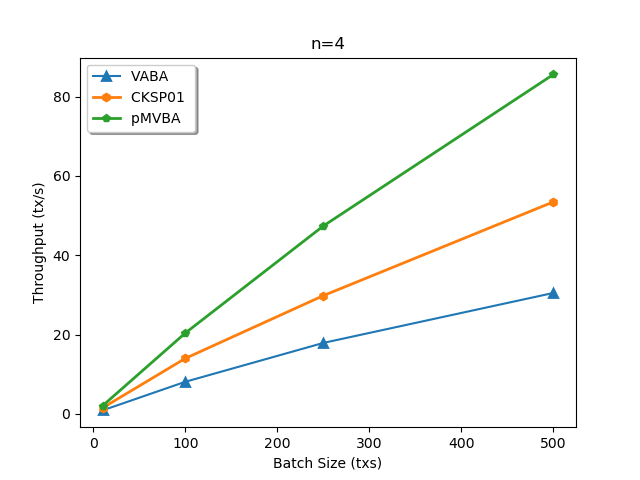}
         \caption{}
         \label{fig:throughput1}
     \end{subfigure}
     \hfill
     \begin{subfigure}[b]{0.48\textwidth}
         \centering
         \includegraphics[width=\textwidth]{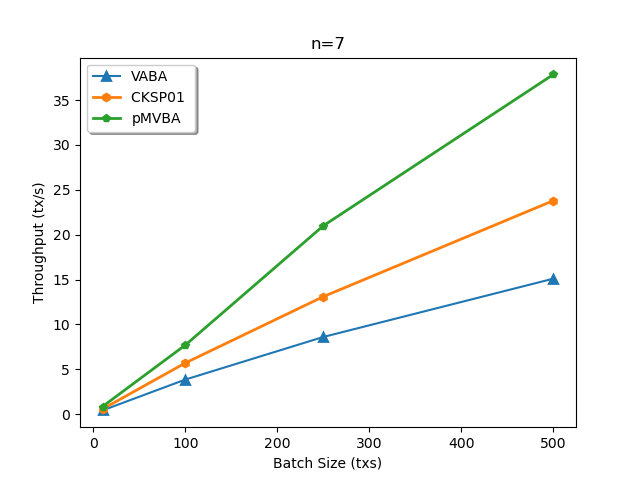}
         \caption{}
         \label{fig:throughput2}
     \end{subfigure}
     \vfill
     \begin{subfigure}[b]{0.48\textwidth}
         \centering
         \includegraphics[width=\textwidth]{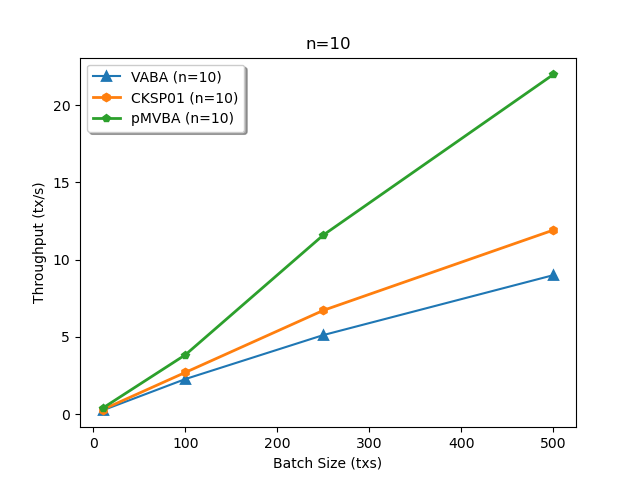}
         \caption{}
         \label{fig:throughput3}
     \end{subfigure}
     \hfill
     \begin{subfigure}[b]{0.48\textwidth}
         \centering
         \includegraphics[width=0.89\textwidth, height = 0.727\textwidth]{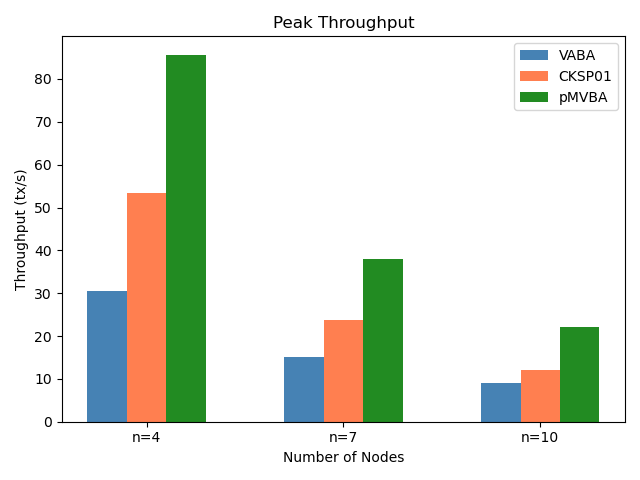}
         \caption{}
         \label{fig:throughput4}
     \end{subfigure}
     \caption{Throughput of Several MVBA Protocols}
     \label{fig:throughput}

\end{figure*}

\paragraph{Throughput-latency trade-off.} Fig. \ref{fig:LaThgh} shows the throughput latency trade-offs in pMVBA, CKSP01, and VABA. All the trends are roughly straight-line, indicating that all of them eventually become network-bound. pMVBA not only achieves higher peak throughput, but also has a latency that is always smaller than that of VABA and CKSP01 at the same throughput. Moreover,
the latency gap between pMVBA and VABA is greatly
narrowed at some small scales (VABA) and large scales (pMVBA). This evidence indicates that our multi-fold improvements greatly expand the applicability of the asynchronous BFT protocol.

%\ref{fig:LaThgh}
\begin{figure*}[h!]
     \centering
     \begin{subfigure}[b]{0.48\textwidth}
         \centering
         \includegraphics[width=\textwidth]{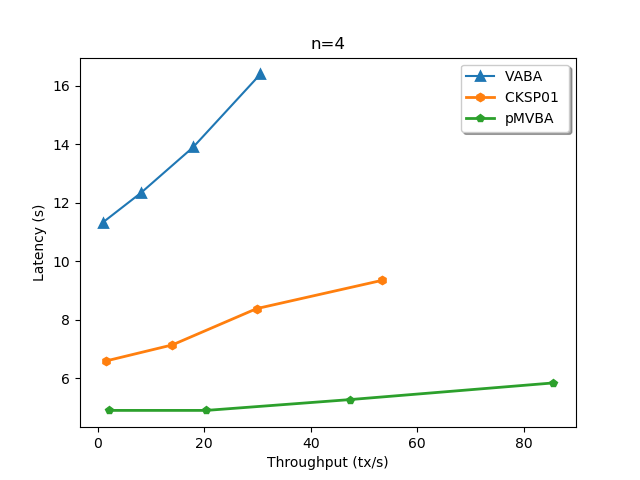}
         \caption{}
         \label{fig:throughput1}
     \end{subfigure}
     \hfill
     \begin{subfigure}[b]{0.48\textwidth}
         \centering
         \includegraphics[width=\textwidth]{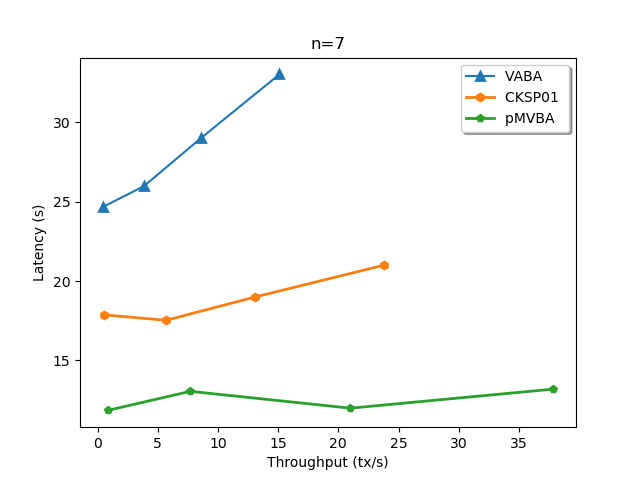}
         \caption{}
         \label{fig:LaThgh2}
     \end{subfigure}
     \vfill
     \begin{subfigure}[b]{0.48\textwidth}
         \centering
         \includegraphics[width=\textwidth]{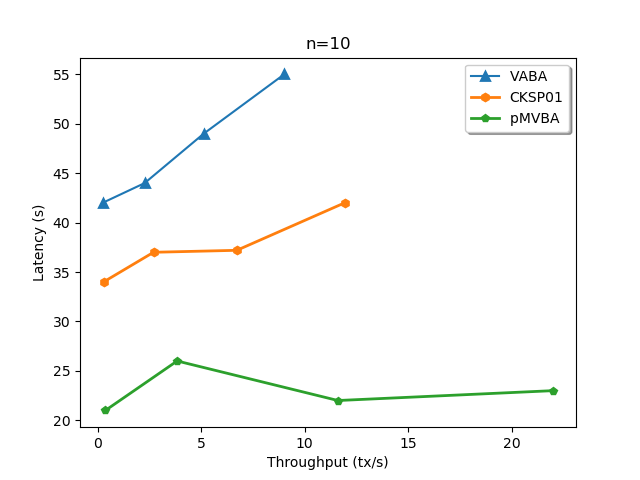}
         \caption{}
         \label{fig:LaThgh3}
         
     \end{subfigure}
     \hfill
     \begin{subfigure}[b]{0.48\textwidth}
         \centering
         \includegraphics[width=\textwidth]{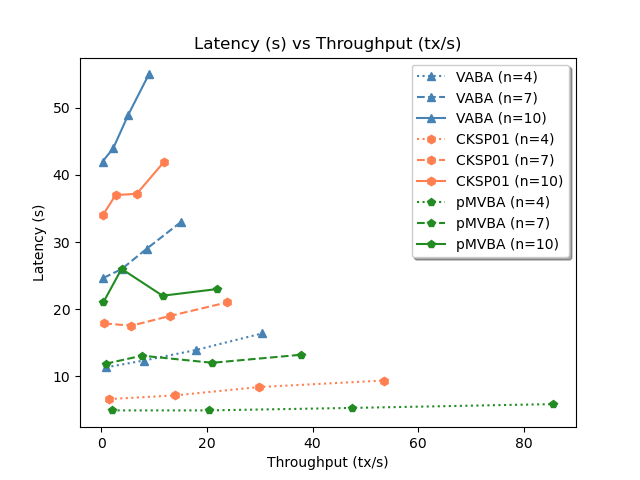}
         \caption{}
         \label{fig:LaThgh4}
        
     \end{subfigure}
    
     \caption{Latency (ts) vs throughput (tx/s) of the MVBA Protocols.} 
     \label{fig:LaThgh}
     
\end{figure*}

\section{Discussions and Conclusion}
In this paper, we introduced pMVBA, a novel MVBA protocol designed to address the high communication complexity inherent in the classic MVBA protocol. By leveraging a committee-based approach combined with the ABBA protocol, pMVBA significantly reduces communication overhead while maintaining optimal resilience and correctness. Our protocol achieves several key improvements over existing MVBA protocols. By dynamically selecting a subset of parties (f + 1) to broadcast proposals, we ensure that at least one honest party is always included, thereby enhancing the protocol's resilience. The integration of the pVCBC protocol allows for efficient proposal broadcasting and verifiable proof generation. The recommend step replaces the commit step used in the classic MVBA protocol, reducing the need for extensive communication bit exchanges and achieving agreement with fewer communication rounds. Theoretical analysis and case studies demonstrated the effectiveness of pMVBA, showing that it achieves an expected constant asynchronous round count and optimal message complexity of $O(n^2)$, where $n$ represents the number of parties. The communication complexity is reduced to $O((l + \lambda)n^2)$, making pMVBA suitable for large-scale decentralized applications.

\textbf{Limitations and Future Work.} While pMVBA presents significant advancements, it is not without limitations. One limitation is the assumption of a trusted setup for cryptographic keys, which may not be practical in all decentralized environments. Additionally, the protocol's performance in extremely large networks with highly dynamic membership has not been thoroughly tested and may present scalability challenges. Another limitation is the reliance on the ABBA protocol for agreement, which, while efficient, could still be optimized further to handle more adversarial conditions. Moreover, the security parameters and their impact on the overall performance need more comprehensive analysis under various network conditions. Future work will focus on addressing these limitations. We plan to explore alternative setups that do not require a trusted third party for key distribution, thus enhancing the protocol's applicability in trustless environments. We will also investigate adaptive mechanisms to improve scalability in large and dynamic networks. Further optimizations to the pMVBA protocol will be considered to enhance its resilience and efficiency. Additionally, extensive empirical evaluations under diverse network conditions will be conducted to better understand the protocol's performance and robustness.

\bibliography{references}

@article{LAMPORT82,
  title={The Byzantine Generals Problem},
  author={Lamport, Leslie and Shostak, Robert and Pease, Marshall},
  journal={ACM Transactions on Programming Languages and Systems (TOPLAS)},
  volume={4},
  number={3},
  pages={382--401},
  year={1982},
  publisher={ACM}
}

@inproceedings{PEASE80,
  title={Reaching Agreement in the Presence of Faults},
  author={Pease, Marshall and Shostak, Robert and Lamport, Leslie},
  booktitle={Journal of the ACM (JACM)},
  volume={27},
  pages={228--234},
  year={1980},
  organization={ACM}
}

@inproceedings{CASTRO99,
  title={Practical Byzantine Fault Tolerance},
  author={Castro, Miguel and Liskov, Barbara},
  booktitle={OSDI},
  pages={173--186},
  year={1999}
}

@article{HOTSTUFF19,
  title={HotStuff: BFT Consensus in the Lens of Blockchain},
  author={Yin, Maofan and Malkhi, Dahlia and Reiter, Michael and Gueta, Guy and Abraham, Ittai},
  journal={arXiv preprint arXiv:1803.05069},
  year={2019}
}

@article{TENDERMINT14,
  title={Tendermint: Byzantine Fault Tolerance in the Age of Blockchains},
  author={Buchman, Ethan},
  journal={PhD Thesis, University of Guelph},
  year={2014}
}

@article{FLP85,
  title={Impossibility of Distributed Consensus with One Faulty Process},
  author={Fischer, Michael J and Lynch, Nancy and Paterson, Mike},
  journal={Journal of the ACM (JACM)},
  volume={32},
  number={2},
  pages={374--382},
  year={1985},
  publisher={ACM}
}

@article{BENOR83,
  title={Another Advantage of Free Choice: Completely Asynchronous Agreement Protocols},
  author={Ben-Or, Michael},
  journal={Proceedings of the second annual ACM symposium on Principles of distributed computing},
  pages={27--30},
  year={1983},
  publisher={ACM}
}

@inproceedings{MILLER16,
  title={HoneyBadgerBFT: The Need for Speed in Asynchronous BFT},
  author={Miller, Andrew and Xia, Yihong and Croman, Kyle and Shi, Elaine and Song, Dawn},
  booktitle={Proceedings of the 2016 ACM SIGSAC Conference on Computer and Communications Security},
  pages={210--227},
  year={2016},
  organization={ACM}
}

@article{DUMBO20,
  title={Dumbo: Faster Asynchronous BFT Protocols},
  author={Liu, Zekun and Shi, Elaine},
  journal={Proceedings of the 2020 ACM SIGSAC Conference on Computer and Communications Security},
  pages={210--227},
  year={2020},
  organization={ACM}
}

@inproceedings{CACHIN01,
  title={Secure and Efficient Asynchronous Broadcast Protocols},
  author={Cachin, Christian and Kursawe, Klaus and Petzold, Frank and Shoup, Victor},
  booktitle={Proceedings of the 21st Annual International Cryptology Conference on Advances in Cryptology},
  pages={524--541},
  year={2001},
  organization={Springer}
}

@article{BYZ21,
  title={Algorand: Scaling Byzantine Agreements for Cryptocurrencies},
  author={Gilad, Yossi and Hemo, Rotem and Micali, Silvio and Vlachos, Georgios and Zeldovich, Nickolai},
  year={2017},
  journal={Proceedings of the 26th ACM Symposium on Operating Systems Principles (SOSP)}
}

@article{COINCIDENCE,
  title={COINcidence: A Trusted Committee-based Consensus Protocol},
  author={Gencer, Adem Efe and Basu, Shehar and Eyal, Ittai and Sirer, Emin Gun},
  year={2018},
  journal={arXiv preprint arXiv:1802.06993}
}

@article{LU20,
  title={Dumbo-MVBA: Optimal Multi-Valued Validated Asynchronous Byzantine Agreement, Revisited},
  author={Lu, Yuan and Lu, Zhenliang and Wang, Guiling},
  year={2020},
  journal={Proceedings of the 39th Symposium on Principles of Distributed Computing}
}

@article{SATOSHI08,
  title={A peer-to-peer electronic cash system},
  author={Nakamoto, Satoshi},
  year={2008},
  journal={http://bitcon.org/bitcoin.pdf}
}

@article{CACHIN17,
  title={Blockchain consensus protocols in the wild (keynote talk)},
  author={Cachin, Christian and Vukolic, Marko},
  year={2017},
  journal={International Symposium on Distributed Computing, DISC}
}

@article{SHRESTHA20,
  title={On the optimality of optimistic responsiveness},
  author={Shrestha, Nibesh and Abraham ,Ittai and Nayak, Kartik and Ren, Ling},
  year={2020},
  journal={Proceedings of the 2020 ACM SIGSAC Conference on Computer and Communications Security}
}

@article{PASS17,
  title={Rethinking large-scale consensus},
  author={Pass, Rafael and Shi, Elaine},
  year={2017},
  journal={IEEE 30th Computer Security Foundations Symposium (CSF)}
}

@article{ABRAHAM20,
  title={Sync hotstuff: Simple and practical synchronous state machine replication},
  author={Abraham, Ittai and Malkhi, Dahlia and Nayak, Kartik and  Ling, Ren and  Maofin, Yin},
  year={2020},
  journal={IEEE Symposium on Security and Privacy (SP)}
}

@String{JACM = "J. ACM" }

@String{Computing = "Computing" }

@String{Computer = "{IEEE} Computer" }

@String{Springer = "Springer-Verlag" }

@ArtifactSoftware{R,
    title = {R: A Language and Environment for Statistical Computing},
    author = {{R Core Team}},
    organization = {R Foundation for Statistical Computing},
    address = {Vienna, Austria},
    year = {2019},
    url = {https://www.R-project.org/},
}

@Inproceedings{Hotstuff01,
  author =       "{Maofan Yin, Dahlia Malkhi, Michael K. Reiter, Guy Golan Gueta, and Ittai Abraham}",
  title =        "HotStuff: BFT consensus in the lens of blockchain",
  booktitle =    "{PODC}",
  year =         "{2019}"
}

@Inproceedings{THRESH01,
  author =       "{Victor Shoup}",
  title =        "Practical threshold signatures.",
  booktitle =    "{International Conference on the Theory and Applications of Cryptographic Techniques.}",
  year =         "2000",
  publisher =    "Springer"
}

@Inproceedings{FASTERDUMBO,
  author =       "{Bingyong Guo, Zhenliang Lu, Qiang Tang, Jing Xu and Zhenfeng Zhang}",
  title =        "Dumbo: Faster Asynchronous BFT Protocols",
  year =         "2020",
  booktitle =    "Proceedings of the ACM Conference on Computer and Communications Security",
   pages    = "803–818",
   url =      {"https://doi.org/10.1145/3372297.3417262"}
}

@Inproceedings{Joleton,
  title =        "Jolteon and ditto: Network-adaptive efficient consensus with
asynchronous fallback",
  author =       "{R. Gelashvili, L. Kokoris-Kogias, A. Sonnino, A. Spiegelman, and
Z. Xiang}",
 
  year =         "2021",
   booktitle =    "arXiv preprint",
   pages    = "",
   url =      {"arXiv preprint arXiv:2106.10362"}
}

@Inproceedings{SPEEDINGDUMBO,
  title =        "Speeding Dumbo: Pushing Asynchronous BFT Closer to Practice",
  author =       "{Bingyong Guo, Yuan Lu,  Zhenliang Lu, Qiang Tang, Jing Xu and Zhenfeng Zhang}",
  year =         "2022",
  booktitle =    "Network and Distributed Systems Security (NDSS) Symposium ",
   pages    = "",
   url =      {"https://dx.doi.org/10.14722/ndss.2022.24385"}
}

@Inproceedings{BOLTDUMBO,
  author =       "{Y. Lu, Z. Lu, and Q. Tang}",
  title =        "Bolt-dumbo transformer: Asynchronous consensus as fast as pipelined BFT",
  year =         "2021",
  booktitle =    "arXiv preprint",
  pages    = "",
  url =      {"arXiv:2103.09425"}
}

@Inproceedings{DAG,
  author =       "{I. Keidar, E. Kokoris-Kogias, O. Naor, and A. Spiegelman}",
  title =        "All you need is DAG",
  year =         "2021",
  booktitle =    "Proceedings of the 2021 ACM Symposium on Principles of Distributed Computing",
   pages    = "165–175",
   url =      {}
}

@Inproceedings{SIG01,
  author =       "{Achour Mostefaoui, and Michel Raynal}",
  title =        "{Signature-free asynchronous byzantine systems: from multivalued to binary binary consensus with $t < n/3, O(n^3)$ messages, and constant time.}",
  booktitle =    "{Acta informatica }",
  year =         "2017",
  volume    = "54"
}

@Inproceedings{HONEYBADGER01,
  author =       "{Andrew Miller, Yu Xia, Kyle Croman, Elaine Shi, and Dawn Song}",
  title =        "{The honey badger of BFT protocols.}",
  booktitle =    "{Proceedings of the 2016 ACM SIGSAC Conference on Computer and Communication Security, CCS'16}",
  year =         "{2016}",
  publisher =    "{ACM Press}",
  address =      "{New York, NY, USA}"
}

@Inproceedings{BORN01,
  author =       "{Benoît Libert, Marc Joye, and Moti Yung}",
  title =        "{Born and raised distributively: Fully distributed non-interactive adaptively-secure threshold signatures with short shares. }",
  booktitle =    "{Theory of Computer Science}",
  year =         "{2016}",
  pages = "{1-24}",
  volume = "{645}"
  
}

@Inproceedings{CACHIN02,
  author =       "{Christian Cachin, Klaus Kursawe, Anna Lysyanskaya, and Reto Strobl}",
  title =        "{Asynchronous verifiable secret sharing and proactive cryptosystems.}",
  booktitle =    "{In Proc. ACM CCS}",
  year =         "{2002}",
  pages = "{88–97}"
}

@Inproceedings{VICTOR01,
  author =       "{Klaus Kursawe and Victor Shoup}",
  title =        "{Optimistic asynchronous atomic broadcast.}",
  booktitle =    "{In International Colloquium on Automata, Languages, and Programming.}",
  year =         "{2005}",
  pages = "{204–215}"
}

@Inproceedings{CONS03,
  author =       "{Michael J. Fischer, Nancy A. Lynch, and Michael S. Paterson}",
  title =        "{Impossibility of
distributed consensus with one faulty process.32(2):}",
  booktitle =    "{ACM}",
  year =         "1985",
  publisher =    "ACM",
  pages =        "374–382",
  month   = "April"
}

@Inproceedings{DUAN18,
  author =       "{Sisi Duan, Michael K. Reiter, and Haibin Zhang}",
  title =        "{BEAT: asynchronous BFT made practical.}",
  booktitle =    "{Proceedings of the 2018 ACM SIGSAC Conference on Computer and
Communications Security, CCS}",
  year =         "2018",
  publisher =    "ACM",
  address =      "{Toronto, ON, Canada}"
}

@Inproceedings{BYZ08,
  author =       "{Miguel Castro, Barbara Liskov}",
  title =        "{Practical byzantine fault tolerance}",
  booktitle =    "{OSDI,
volume 99}",
  year =         "1999",
  pages =        "656–666"
}

@Inproceedings{BYZ10,
  author =       "{Leslie Lamport}",
  title =        "{The weak Byzantine generals problem}",
  booktitle =    "{JACM 30, 3}",
  year =         "1983",
  address =      "",
  pages =        "668–676"
}

@Inproceedings{SECURE02,
  author =       "{Christian Cachin, Klaus Kursawe, and Victor Shoup}",
  title =        "{Random oracles in constantinople: Practical asynchronous byzantine agreement using cryptography}",
  booktitle =    "{Journal of Cryptology}",
  year =         "2000",
  publisher =    "",
  address =      "",
  pages =        "",
  month   = ""
}

@Inproceedings{SECURE03,
  author =       "{Christian Cachin, Klaus Kursawe, Frank Petzold, and Victor Shoup}",
  title =        "{Secure and efficient asynchronous broadcast protocols.}",
  booktitle =    "{Advances in Cryptology}",
  year =         "2001"
}

@Inproceedings{SECURE05,
  author =       "{O. Goldreich, S. Goldwasser, and S. Micali}",
  title =        "{How to construct random functions}",
  booktitle =    "{Journal of the ACM}",
  year =         "1986",
  month = "October",
  publisher =    "",
  pages =        "792–807",
  volume = "33"
}

@Inproceedings{VABA,
  author =       "{Ittai Abraham, Dahlia Malkhi, and Alexander Spiegelman}",
  title =        "{Asymptotically optimal validated asynchronous byzantine agreement}",
  booktitle =    "{PODC}",
  year =         "2019",
  publisher =    "",
  address =      "",
  pages =        "",
  month   =  ""
}

@Inproceedings{BYZ17,
  author =       "{Ittai Abraham, Dahlia Malkhi, and Alexander Spiegelman}",
  title =        "{Asymptotically optimal validated asynchronous byzantine agreement}",
  booktitle =    "{PODC}",
  year =         "2019",
  publisher =    "",
  address =      "",
  pages =        "",
  month   =  ""
}

@Inproceedings{BYZ19,
  author =       "{Shir Cohen, Idit Keidar, and Alexander Spiegelman}",
  title =        "{Brief Announcement: Not a COINcidence: Sub-Quadratic Asynchronous Byzantine Agreement WHP}",
  booktitle =    "{Proceedings of the 39th Symposium on Principles of Distributed Computing}",
  year =         "2020",
  publisher =    "",
  address =      "",
  pages =        "175-177",
  month   =  "July"
}

@Inproceedings{BYZ20,
  author =       "{Yuan Lu, Zhenliang Lu, Qiang Tang and Guiling Wang}",
  title =        "{Dumbo-MVBA: Optimal Multi-Valued Validated Asynchronous Byzantine Agreement, Revisited}",
  booktitle =    "{Proceedings of the 39th Symposium on Principles of Distributed Computing}",
  year =         "2020",
  publisher =    "",
  address =      "",
  pages =        "",
  month   =  "July"
}

@Inproceedings{SC,
  author =       "{Vitalik Buterin}",
  title =        "{A next-generation smart contract and decentralized application platform}",
  booktitle =    "{White Paper}",
  year =         "2014",
}

@Inproceedings{BOLD01,
  author =       "{Alexandra Boldyreva}",
  title =        "{Threshold signatures, multisignatures and blind signatures based on the gap-diffie- hellman-group signature scheme.}",
  booktitle =    "{PKC}",
  year =         "2003",
  publisher =    "Springer",
  pages = "31–46"
}

@Inproceedings{Kate01,
  author =       "{Aniket Kate and Ian Goldberg}",
  title =        "{Distributed key generation for the internet}",
  booktitle =    "{IEEE ICDCS}",
  year =         "2009",
  publisher =    "",
  address =      "",
  pages =        "119–128",
  month   =  ""
}

@article{ALEPH,
  title={Aleph: Efficient Atomic Broadcast in Asynchronous Networks with Adaptive Adversaries},
  author={Goodman, Adam and McBride, Sam and Li, Yixin and Tamir, Turing and Ren, Ling},
  year={2020},
  journal={Proceedings of the 2020 IEEE Symposium on Security and Privacy (SP)}
}

@article{DIS01,
  title={Asynchronous Distributed Key Generation for Computationally-secure Random Secrets},
  author={Ren, Ling and Devadas, Srinivas},
  year={2020},
  journal={Proceedings of the 2020 IEEE Symposium on Security and Privacy (SP)}
}

@article{DIS02,
  title={Fast and Secure Asynchronous Distributed Key Generation},
  author={Manohar, Nithin and Abraham, Ittai and Malkhi, Dahlia and Alvisi, Lorenzo},
  year={2020},
  journal={Proceedings of the 2020 ACM SIGSAC Conference on Computer and Communications Security (CCS)}
}

@article{DIS03,
  title={Asynchronous Byzantine Consensus with Polylogarithmic Communication Complexity},
  author={Backes, Michael and Fiore, Dario and Barbosa, Manuel},
  year={2020},
  journal={Proceedings of the 2020 IEEE Symposium on Security and Privacy (SP)}
}
%
% ---- Bibliography ----
%
% BibTeX users should specify bibliography style 'splncs04'.
% References will then be sorted and formatted in the correct style.
%
% \bibliographystyle{splncs04}
% \bibliography{mybibliography}
%

\appendix

\section{Definitions}
\subsection{Verifiable Consistent Broadcast} \label{VCBC}
A protocol completes a verifiable consistent broadcast if it satisfies the following properties:

\begin{itemize}
    \item \textbf{Validity.} If an honest party sends $m$, then all honest parties eventually delivers $m$.
    \item \textbf{Consistency.} If an honest party delivers $m$ and another honest party delivers $m'$, then $m=m'$.
    \item \textbf{Integrity.} Every honest party delivers at most one request. Moreover, if the sender $p_s$ is honest, then the request was previously sent by $p_s$.
\end{itemize}
%\subsection{ABBA} \label{ABBAD}
%The ABBA protocol guarantees the following properties. Additionally, the biased external validity property applies to the biased ABBA protocol.

%\begin{itemize}
%    \item \textbf{Agreement.} If an honest party outputs a bit $b$, then every honest party outputs the same bit $b$.
%    \item \textbf{Termination.} If all honest parties receive input, then all honest parties will output a bit $b$.
%    \item \textbf{Validity.} If any honest party outputs a bit $b$, then $b$ was the input of at least one honest party.
 %   \item \textbf{Biased External Validity.} If at least $\langle f + 1 \rangle$ honest parties propose $1$, then any honest party that terminates will decide on $1$.
%\end{itemize}

\subsection{Threshold Signature Scheme} \label{TSS}
The $(f+1, n)$ non-interactive threshold signature scheme is a set of algorithms used by $n$ parties, with up to $f$ potentially faulty. The threshold signature scheme satisfies the following security requirements, except with negligible probabilities:

\begin{itemize}
    \item \textbf{Non-forgeability.} To output a valid signature, a party requires $t$ \textit{signature shares}. Therefore, it is computationally \textit{infeasible} for an adversary to produce a valid signature, as an adversary can corrupt up to $f$ parties ($f < t$) and thus cannot generate enough \textit{signature shares} to create a valid signature proof for a message.
    \item \textbf{Robustness.} It is computationally \textit{infeasible} for an adversary to produce $t$ valid \textit{signature shares} such that the output of the share combining algorithm is not a valid signature.
\end{itemize}

The scheme provides the following algorithms:
\begin{itemize}
    \item \textit{Key generation algorithm: KeySetup($\{0,1\}^\lambda, n, f+1) \rightarrow \{UPK, PK, SK\}$}. Given a security parameter $\lambda$, this algorithm generates a universal public key $UPK$, a vector of public keys $PK := (pk_1, pk_2, \ldots, pk_n)$, and a vector of secret keys $SK := (sk_1, sk_2, \ldots, sk_n)$.

    \item \textit{Share signing algorithm: SigShare$_i(sk_i, m) \rightarrow \sigma_i$}. Given a message $m$ and a secret key share $sk_i$, this deterministic algorithm outputs a signature share $\sigma_i$.

    \item \textit{Share verification algorithm: VerifyShare$_i(m, (i, \sigma_i)) \rightarrow 0/1$}. This algorithm takes three parameters as input: a message $m$, a signature share $\sigma_i$, and the index $i$. It outputs $1$ or $0$ based on the validity of the signature share $\sigma_i$ (whether $\sigma_i$ was generated by $p_i$ or not). The correctness property of the signing and verification algorithms requires that for a message $m$ and party index $i$, $\Pr[VerifyShare_i(m, (i, SigShare_i(sk_i, m))) = 1] = 1$.

    \item \textit{Share combining algorithm: CombineShare$_i(m, \{(i, \sigma_i)\}_{i \in S}) \rightarrow \sigma / \perp$}. This algorithm takes two inputs: a message $m$ and a list of pairs $\{(i, \sigma_i)\}_{i \in S}$, where $S \subseteq [n]$ and $|S| = f+1$. It outputs either a signature $\sigma$ for the message $m$ or $\perp$ if the list contains any invalid signature share $(i, \sigma_i)$.

    \item \textit{Signature verification algorithm: Verify$_i(m, \sigma) \rightarrow 0/1$}. This algorithm takes two parameters: a message $m$ and a signature $\sigma$, and outputs a bit $b \in \{0, 1\}$ based on the validity of the signature $\sigma$. The correctness property of the combining and verification algorithms requires that for a message $m$, $S \subseteq [n]$, and $|S| = f+1$, $\Pr[\text{Verify}_i(m, \text{Combine}_i(m, \{(i, \sigma_i)\}_{i \in S})) = 1 \mid \forall i \in S, \text{VerifyShare}_i(m, (i, \sigma_i)) = 1] = 1$.

\end{itemize}

\subsection{Threshold Coin-Tossing} \label{TCT}
We assume a trusted third party has an unpredictable pseudo-random generator (PRG) $G : R \rightarrow \{1, \ldots, n\}^s$, known only to the dealer. The generator takes a string $r \in R$ as input and returns a set $\{S_1, S_2, \ldots, S_s\}$ of size $s$, where $1 \leq S_i \leq n$. Here, $\{r_1, r_2, \ldots, r_n\} \in R$ are shares of a pseudorandom function $F$ that maps the coin name $C$. The threshold coin-tossing scheme satisfies the following security requirements, except with negligible probabilities:

\begin{itemize}
    \item \textbf{Pseudorandomness.} The probability that an adversary can predict the output of $F(C)$ is $\frac{1}{2}$. The adversary interacts with the honest parties to collect \textit{coin-shares} and waits for $t$ \textit{coin-shares}, but to reveal the coin $C$ and the bit $b$, the adversary requires at least $\langle t-f\rangle$ \textit{coin-shares} from the honest parties. If the adversary predicts a bit $b$, then the probability is $\frac{1}{2}$ that $F(C) = b$ ($F(C) \in \{0, 1\}$). Although the description is for single-bit outputs, it can be trivially modified to generate $k$-bit strings by using a $k$-bit hash function to compute the final value.
    \item \textbf{Robustness.} It is computationally \textit{infeasible} for an adversary to produce a coin $C$ and $t$ valid \textit{coin-shares} of $C$ such that the share-combine function does not output $F(C)$.
\end{itemize}

The dealer provides a private function $CShare_i$ to every party $p_i$, and two public functions: $CShareVerify$ and $CToss$. The private function $CShare_i$ generates a share $\sigma_i$ for the party $p_i$. The public function $CShareVerify$ can verify the share. The $CToss$ function returns a unique and pseudorandom set given $f+1$ validated coin shares. The following properties are satisfied except with negligible probability:

\begin{itemize}
    \item For each party $i \in \{1, \ldots, n\}$ and for every string $r_i$, $CShareVerify(r_i, i, \sigma_i) = \text{true}$ if and only if $\sigma_i = CShare_i(r_i)$.
    \item If $p_i$ is honest, then it is impossible for the adversary to compute $CShare_i(r)$.
    \item For every string $r_i$, $CToss(r, \Sigma)$ returns a set if and only if $|\Sigma| \geq f+1$ and each $\sigma \in \Sigma$ and $CShareVerify(r, i, \sigma) = \text{true}$.
\end{itemize}

\section{Miscellaneous}
\subsection{Comparison with Atomic Broadcast Protocol} \label{ComparisonABC}
As discussed earlier, when the inputs of each party are nearly identical, outputting the requests of $n-f$ parties is not a viable solution. This approach results in higher computational effort without increasing the number of accepted transactions. Table \ref{table:Table_III} provides a comparison of the communication complexity of our protocol with atomic broadcast protocols. Notably, no atomic broadcast protocol can eliminate the multiplication of $O(n^3)$ terms. Additionally, atomic broadcast protocols require extra rounds of message exchanges. Here, we focus solely on the communication complexity.

\begin{table}[ht] \label{ABC}
    
\caption{Comparison of the communication complexity with the atomic broadcast protocols}
\begin{center}
\begin{tabular}{||c |c |c |c ||} 
 \hline
 Protocols & Communication Complexity  \\ [0.5ex] 
 \hline\hline
 HB-BFT/BEAT0 \cite{HONEYBADGER01} & $O(ln^2 + \lambda n^3 logn)$  \\ 
 \hline
 BEAT1/BEAT2 \cite{DUAN18} & $O(ln^3 + \lambda n^3)$   \\
 \hline
 Dumbo1 \cite{FASTERDUMBO} & $O(ln^2 + \lambda n^3 logn)$   \\
 \hline
 Dumbo2 \cite{FASTERDUMBO} & $O(ln^2 + \lambda n^3 logn)$  \\ [1ex] 
 \hline
 Speeding Dumbo \cite{SPEEDINGDUMBO} & $O(ln^2 + \lambda n^3 logn)$  \\ [1ex] 
 \hline
  Our Work  & $O(ln^2 + \lambda n^2)$  \\ [1ex] 
 \hline
\end{tabular}
\label{table:Table_III}
\end{center}
\end{table}

\subsection{The Challenge of Classical MVBA Designs} \label{MVBAChallenge}
To maintain a message complexity of $O(n^2)$, the classic MVBA protocol incorporates the Verifiable Consistent Broadcast (VCBC) protocol. Additionally, it introduces the concept of \textit{external validity}, wherein an input is deemed valid if it satisfies certain criteria. The protocol operates as follows: 

Each party utilizes the VCBC protocol to broadcast their request and generate a corresponding verifiable proof. Upon completion of this step, the party broadcasts both the verifiable proof and the request, providing evidence that the request has been broadcast to every other party. 

Upon receiving verifiable proof from $n-f$ parties, signaling the completion of the VCBC protocol by the threshold number of parties, a party can initiate the ABBA protocol. However, there exists the possibility that other parties have not received sufficient verifiable proof, or that the adversary manipulates the distribution of proofs in a manner that prevents the majority of ABBA instances from receiving adequate proof. 

To address this, each party communicates with others by transmitting an $n$-bit array, indicating receipt of verifiable proof from $n-f$ parties. Upon receipt of $n-f$ verifiable proofs, a party generates a permutation of the parties and invokes ABBA instances based on the order of permutation, ensuring that the number of ABBA instances remains constant on average.

\section{Deferred protocols}

\paragraph*{Construction of the Permutation} \label{Permutation}
Here is the pseudocode for the permutation protocol (see Algorithm \ref{algo:per}). Below is a step-by-step description of the protocol:

\begin{itemize}
    \item Upon invocation of the Permutation protocol, a party generates a coin-share $\sigma_i$ for the instance and broadcasts $\sigma_i$ to every party, then waits for $2f+1$ coin-shares (lines 22-24).
    \item When a party receives a coin-share from another party $p_k$ for the first time, it verifies the coin-share (ensuring it is from $p_k$) and accumulates the coin-share in the set $\Sigma$. The party continues to respond to coin-shares until it has received $2f+1$ valid shares (lines 27-29).
    \item Upon receiving $2f+1$ valid coin-shares, a party uses its $CToss$ function and the collected coin-shares to generate a permutation of the $n$ parties (lines 24-25).
\end{itemize}

%  LocalWords:  biometrics cryptographic parallelized lossy

\section{Agreement protocol}
\subsection{\textbf{Asynchronous Binary Byzantine Agreement (ABBA)}} \label{appendix:ABBA}
The ABBA protocol allows parties to agree on a single bit $b \in \{0, 1\}$ \cite{BYZ10, SECURE05, SIG01}. We have adopted the ABBA protocol from \cite{SECURE02}, as given in Algorithm \ref{algo:ABBA}. The expected running time of the protocol is $O(1)$, and it completes within $O(k)$ rounds with probability $1 - 2^{-k}$. Since the protocol uses a common coin, the total communication complexity becomes $O(kn^2)$. For more information on how to realize a common coin from a threshold signature scheme, we refer interested readers to the \cite{HONEYBADGER01}.

%\nasit{The guarantee of bias towards 1 seems to be really important for the main algorithm. It is not clear to me how easy it is to remove it.}

\paragraph*{Construction of the ABBA biased towards 1} 
We use the ABBA protocol from \cite{SECURE02}. We optimize and changed the protocol for biased towards $1$. The biases towards $1$ property ensures that if at least one party input $1$ in the pre-process step. The pseudocode of the ABBA protocol biased towards 1 is given in Algorithm \ref{algo:ABBA}, and a step-by-step description is provided below:

\begin{itemize}
    \item \textbf{Pre-process step} . Generate an $\sigma_0$ share on the message and send the pre-process type message to all parties.
    \item Collect $2f+1$ proper pre-processing messages. (see (Algorithm \ref{algo:ABBA})).

    \item \textbf{Repeat loop:} Repeat the following steps 1-4 for rounds round = 1,2,3,...
    \begin{itemize}
        \item Pre-Vote step. (see Algorithm \ref{algo:ABBA-PreVote})
        \begin{itemize}
            \item If round = 1, $b=1$ if there is a pre-processing vote for $1$ (biased towards 1, taking one vote instead of majority) else  $b=0$. (see lines 3-4).
            \item If round $>$ $1$, if there is a threshold signature on main-vote message from round-1 then decide and return. (see lines 18-20)
            \item Upon receiving main-vote for $0/1$, update $b$ and the justification. (see lines 12-17) 
           % \item $b=1$, if there is a main-vote for $1$.
            \item $b= F(ID, r-1)$, all the main-vote are abstain and the justification is the threshold signature of the abstain vote. (see lines 6-7)
            \item Produce signature-share on the message (ID, pre-vote, round, b) and send the message of the form pre-vote,round,b,justification, signature-share). (lines 9-11)
        \end{itemize}
        \item Main-vote step. (See Algorithm \ref{algo:ABBA-MainVote})
        \begin{itemize}
            \item Collect (2f+1) properly justified round pre-vote messages. (lines 14-19)
            \item If there are (2f+1) pre-votes for 0/1, $v=0/1$ and the justification is the threshold-signature of the the sign-shares on pre-vote messages. (lines 5-7)
            
            \item If there are (2f+1) pre-votes for both $0$ and $1$, $v=abstain$ and the justification is the two sign-shares from pre-vote 0 and pre-vote 1. (lines 9-10)
            \item Produce signature-share on the message (ID, pre-vote, round, v) and send the message of the form (main-vote,round,v,justification, signature-share) to all parties (lines 11-13)
        \end{itemize}
        \item  Check for decision. (See Algorithm \ref{algo:ABBA-CheckForDecision})
        \begin{itemize}
            \item Collect (2f+1) properly justified main-votes of the round $round$. (line 3)
            \item If these is no abstain vote, all main-votes for $b\in \{0,1\}$, then decide the value $b$. Produce a threshold signature on the main votes' sign-shares and send the threshold signatures to all parties and return.  (lines 4-7)
            \item Otherwise, go to Algorithm \ref{algo:ABBA-CheckForDecision}. line (11)
        \end{itemize}
        \item Common Coin. (See Algorithm \ref{algo:ABBA-CommonCoin})
        \begin{itemize}
            \item Generate a coin-share of the coin (ID, round) and send to all parties a message of the form (coin, round,coin-share). (lines 1-4) 
            \item Collect (2f+1) shares of the coin (ID,round $\sigma_k$), and combine these shares to get the value $F(ID, round) \in \{0,1\}$. (lines 5-6)
        \end{itemize}
        
    \end{itemize}
\end{itemize}

\begin{algorithm}[H]
\DontPrintSemicolon
\SetAlgoNoEnd
\SetAlgoNoLine
\SetKwProg{un}{upon receiving}{ do}{}

\SetKwProg{ABBA}{upon}{ do}{}

\ABBA{ABBA(m)}{
\tcc{Preprocess Step.}

 $\sigma_0 \leftarrow SigShare_i (sk_i,m_i)$  see \cite{SECURE02}\;
 
 \textbf{send} $( pre$-$process, m_i, \sigma_i)$ to all parties\;
 
 \textbf{wait until} at least $(n-f)$ pre-process messages have been received.\;

\For{$round \leftarrow 1,2,3,...$}
{
  \textbf{Prevote Step :} Algorithm \ref{algo:ABBA-PreVote}\;

  \textbf{Main-vote Step:} Algorithm \ref{algo:ABBA-MainVote}\;

  \textbf{Check For Decision :} Algorithm \ref{algo:ABBA-CheckForDecision} \;

  \textbf{Common Coin:} Algorithm \ref{algo:ABBA-CommonCoin}\;

}
}
\caption{ABBA biased towards 1: protocol for party $p_i$ }
\label{algo:ABBA}
\end{algorithm}

\begin{algorithm}[H]
\DontPrintSemicolon
\SetAlgoNoEnd
\SetAlgoNoLine
\SetKwProg{un}{upon}{ do}{}
$b \leftarrow \perp$\;
$justification \leftarrow  \perp$\;
\uIf{round = 1}{
  $b \leftarrow 1$ if there is any pre-process message with $m=1$ (biased towards $1$), otherwise 0.
}\uElse{
     $b \leftarrow F(ID, round-1)$\;
    $justification \leftarrow threshold$-$signature \langle ID, main$-$vote, round - 1,abstain\rangle$\;
    
    \textbf{ wait for} $n-f$ justified main-vote \;
     
        % $b \leftarrow abstain$\;

}

$m_i \leftarrow (ID, pre$-$vote, round, b)$\;
$\sigma \leftarrow SigShare_i (sk_i,m_i)$  \;
 
\textbf{send} $( pre$-$vote, r, b, justification, \sigma)$ to all parties\;

 \un{receiving $\langle  main$-$vote,round, v, justification, \sigma  \rangle$ for the first time from party $p_k$} {
  \uIf{v = 0}{
     $b \leftarrow 0$\;
  }\uElseIf{v = 1}{
     $b \leftarrow 1$\;
  }

   $justification \leftarrow threshold$-$signature \langle ID, pre$-$vote, round - 1,b\rangle$
  %\textbf{multi-cast}(b, threshold-signature)\;
  %\textbf{return}(b, threshold-signature)\;  
 
 }

 \un{receiving $\langle  b, threshold$-$signature  \rangle$ for the first time from party $p_k$} {
  \textbf{send}(b, threshold-signature) to all parties\;
  \textbf{return}(b, threshold-signature)\;  
 
 }

\caption{ABBA biased towards 1: Pre-vote step}
\label{algo:ABBA-PreVote}
\end{algorithm}

\begin{algorithm}[H]

\DontPrintSemicolon
\SetAlgoNoEnd
\SetAlgoNoLine
\SetKwProg{un}{upon}{ do}{}

\tcc{Mainvote Step.}

%\un{ CheckMainVote(round) invocation}{
   $\Sigma \leftarrow \{\}$\;
   $PV_0 \leftarrow \{\}$\;
   $PV_1 \leftarrow \{\}$\;
   \textbf{wait until} $|\Sigma|$ = 2f+1\;
  % wait for $(n-f)$ justified prevote(PV) \;

   \uIf{ $|PV_0| = 2f+1$ or $|PV_1| = 2f+1$}{
   $v \leftarrow 0/1$\;
   $justification \leftarrow CombineShare_i(v,{i, \sigma_i}_{i \in \Sigma})$\;
   }\uElse{
      $v \leftarrow abstain$\;
       $justification \leftarrow (\sigma_i \in PV_0, \sigma_j \in PV_1)$\;
   }

   $m_i \leftarrow (ID, main$-$vote, round, v)$\;
$\sigma \leftarrow SigShare_i (sk_i,m_i)$  \;
 
 \textbf{send} $( main$-$vote, round, v, justification, \sigma)$ to all parties\;
% CheckForDecision()\;
    
% }

 \un{receiving $\langle  pre$-$vote, round, b, justification, \sigma  \rangle$ for the first time from party $p_k$} {
\uIf{b = 0}{
   $PV_0 \leftarrow PV_0 +  1 $\;
}\uElseIf{b = 1}{
   $PV_1 \leftarrow PV_1 +  1 $\;
}
 $\Sigma \leftarrow \Sigma \cup \{\sigma_k\} $\;
}

 %\Big( arg max\big\{  messages\{0,1\} \in pre-process $ $ messages\big\} \Big)$\;
\caption{ABBA biased towards 1: Main-Vote step }
\label{algo:ABBA-MainVote}
\end{algorithm}

\begin{algorithm}[ht!]

\DontPrintSemicolon
\SetAlgoNoEnd
\SetAlgoNoLine
\SetKwProg{un}{upon}{ do}{}

%\tcc{Check for decision}

  $\Sigma \leftarrow \{\}$\;
  $isAbstain \leftarrow no$\;
  \textbf{wait until} $|\Sigma|$ = 2f+1\;
  \uIf{isAbstain = no}{
   % $b$ = main-vote(v)\;
   % decide $b$\;
    threshold-signature = $CombineShare_{i}(b, {(i,\sigma_i)}_{i\in \Sigma})$ \;
   \textbf{ send}(threshold-signature) to all parties\;
    return (b, threshold-signature)\;
  }\uElse{
    go to  Algorithm \ref{algo:ABBA-CommonCoin}\;
  }

\un{receiving $\langle main$-$vote, r, v, justification, \sigma \rangle$ for the first time from party $p_k$} {
\uIf{v = abstain}{
   $isAbstain \leftarrow yes$\;
}
 $b \leftarrow v$\;
 $\Sigma \leftarrow \Sigma \cup \{\sigma_k\} $\;
}

\caption{ABBA biased towards 1: Check for Decision for party $p_i$ }
\label{algo:ABBA-CheckForDecision}
\end{algorithm}

\begin{algorithm}[htbp]
\DontPrintSemicolon
\SetAlgoNoEnd
\SetAlgoNoLine
\SetKwProg{un}{upon}{ do}{}

  $\Sigma$ = \{\}\;
  $\sigma_i \leftarrow CShare(r_i)$\;
 \textbf{ send} $\langle coin, round, \sigma_i\rangle$ to all parties\;
  \textbf{wait until} $|\Sigma|$ = 2f+1\;
 % wait for $(n-f)$ coin-shares\;
  $F(ID, round) \in \{0,1\} \leftarrow CToss(r, \Sigma)$\;

\un{receiving $\langle coin, round, \sigma_k \rangle$ for the first time from party $p_k$} {

 $\Sigma \leftarrow \Sigma \cup \{\sigma_k\} $\;
}

 %\Big( arg max\big\{  messages\{0,1\} \in pre-process $ $ messages\big\} \Big)$\;
\caption{ABBA biased towards 1: Common Coin for party $p_i$}
\label{algo:ABBA-CommonCoin}
\end{algorithm}

%%=============================================%%
%% For submissions to Nature Portfolio Journals %%
%% please use the heading ``Extended Data''.   %%
%%=============================================%%

%%=============================================================%%
%% Sample for another appendix section			       %%
%%=============================================================%%

%% \section{Example of another appendix section}\label{secA2}%
%% Appendices may be used for helpful, supporting or essential material that would otherwise 
%% clutter, break up or be distracting to the text. Appendices can consist of sections, figures, 
%% tables and equations etc.

\end{document}